\documentclass[final,5p,twocolumn]{elsarticle}

\usepackage{graphicx}      
\usepackage{natbib}        

\usepackage{amsmath,amssymb,amsfonts,mathtools}
\usepackage{bm}
\usepackage{algorithm}
\usepackage{algcompatible}
\usepackage{mathrsfs}
\usepackage{nicefrac}
\usepackage{url} 

\newcommand{\mat}[1]{\ensuremath{\begin{bmatrix} #1 \end{bmatrix}}}
\newcommand{\vc}[1]{\ensuremath{\begin{bmatrix} #1 \end{bmatrix}}}
\newcommand{\ol}[1]{\overline{#1}}
\newcommand{\ul}[1]{\underline{#1}}
\newcommand{\eps}{\varepsilon}
\renewcommand{\Pr}[1]{\mathrm{Pr}\left[#1\right]}
\newcommand{\PrBig}[1]{\mathrm{Pr}\big[\,#1\;\big]}
\newcommand{\E}[1]{\mathrm{E}\left[#1\right]}
\newcommand{\norm}[1]{\left\lVert #1 \right\rVert}
\newcommand{\norminf}[1]{\left\lVert #1 \right\rVert_{\infty}}
\newcommand{\diag}[1]{\mathrm{diag}( #1 )}
\renewcommand{\exp}[1]{\textnormal{exp}\big( #1 \big)}
\renewcommand{\text}[1]{\textnormal{#1}}
\newcommand{\col}[1]{\mathrm{col}(#1)}
\newcommand{\lag}[1]{\mathrm{lag}(#1)}
\newcommand{\hankel}[2]{\bm{H}_{#1}(#2)}

\newtheorem{theorem}{Theorem}
\newtheorem{lemma}{Lemma}
\newtheorem{proposition}{Proposition}
\newtheorem{remark}{Remark}
\newtheorem{definition}{Definition}
\newtheorem{assumption}{Assumption}
\newproof{proof}{Proof}

\journal{IFAC Journal of Systems and Control}

\begin{document}

\begin{frontmatter}

\title{Stochastic Data-driven Predictive Control of Linear Systems with Sub-Gaussian Disturbances using Causal Predictors\tnoteref{pubnote} 
}
\tnotetext[pubnote]{This manuscript is a revised version of the article published in IFAC Journal of Systems and Control, vol. 25, pp. 100399, 2026 (DOI: \url{https://doi.org/10.1016/j.ifacsc.2026.100399}). It contains corrections of typos relative to the published version.}

\author{Johannes Teutsch} 
\author{Marion Leibold}

\affiliation{
            organization={Chair of Automatic Control Engineering, Department of Computer Engineering, Technical University of Munich},
            addressline={Theresienstr.~90}, 
            city={Munich},
            postcode={D-80333},
            country={Germany. Email: \{johannes.teutsch, marion.leibold\}@tum.de}}

\begin{abstract}
    We present a stochastic data-driven predictive control (DPC) framework for discrete-time linear time-invariant systems subject to sub-Gaussian additive disturbances based solely on input--output data. In contrast to related methods that rely on exact disturbance data or at least sample generation for closed-loop guarantees, the proposed approach leverages a disturbance data estimate. By enforcing consistency of the disturbance data estimate with the available input--output data and system class, we first identify data-driven and provably causal subspace predictors for use in DPC. Then, we analyze statistical properties of the corresponding prediction error, yielding tightened constraints for the nominal predictions that guarantee satisfaction of chance constraints. The proposed DPC scheme comes with guarantees on recursive feasibility and conditional chance constraint satisfaction in closed-loop under standard assumptions. A numerical evaluation study demonstrates the performance of the proposed controller.
\end{abstract}

\begin{keyword}
    Data-driven control theory, Linear system identification, Learning methods for control, 
    Predictive control, Uncertain systems, Stochastic optimal control problems
\end{keyword}

\end{frontmatter}


\section{Introduction} \label{sec:intro}

In recent years, data-driven predictive control (DPC) has become popular as a safe and efficient control strategy for uncertain systems, utilizing measurement data without requiring an explicit model~\citep{coulson2019data,berberich2020data}. At each time step, DPC solves a finite-horizon optimal control problem (OCP) and then applies only the first input of the optimal control sequence. Unlike model predictive control (MPC), DPC leverages persistently exciting (PE) input--output data for predictions, enabled by the fundamental lemma from \cite{willems2005note}. Such data-based predictions are made either directly by linearly combining the PE data trajectories via the fundamental lemma, or indirectly by first identifying subspace predictors as in Subspace Predictive Control (SPC) \citep{favoreel1999spc}; we refer to \citep{fiedler2021relationship,dorfler2021bridging} for a discussion on direct and indirect formulations.

Crucially, the aforementioned data-based predictions are only exact for deterministic linear time-invariant (LTI) systems and noise-free input--output data. Lately, many attempts on noise-tolerant DPC have been presented \citep{berberich2020data,coulson2021distributionally,breschi2023data,chiuso2025harnessing}, but only a few come with closed-loop guarantees such as recursive feasibility and constraint satisfaction.
In particular, when additive disturbances are present, stochastic DPC as proposed by \cite{pan2021stochastic,kerz2023datadriven,pan2025towards} use distributional information to enforce probabilistic chance constraints.

Existing stochastic DPC schemes that offer closed-loop guarantees typically rely on PE input--disturbance--state/ output data to accurately capture the dynamics of disturbed LTI systems~\citep{kerz2023datadriven,pan2021stochastic,yin2023stochastic,wang2022data}. If disturbance data are not available, estimation techniques can be employed \citep{pan2021stochastic}, at the cost of closed-loop guarantees if the estimation is not exact. Although input--output data are commonly available, expecting measurable or precisely estimated disturbances is often unrealistic.

\cite{teutsch2024sampling} tackle this issue by presenting a sampling-based stochastic DPC approach that uses distributional knowledge of the disturbance to generate disturbance data samples that are consistent with the available input--output data and system class. While this allows for closed-loop guarantees, the approach is restricted to bounded disturbances and relies on a restrictive robust first-step constraint in the DPC formulation. 

Recently, \cite{ao2025stochastic} presented an MPC framework considering disturbances that follow a sub-Gaussian distribution, thus including a broad class of practical cases such as all bounded or normal distributions. Closed-loop guarantees are enabled by leveraging an indirect feedback formulation \citep{hewing2020recursively}. However, \cite{ao2025stochastic} rely on an exact system model, which can hardly be identified from noisy input--output data.

In this work, we overcome this issue by presenting a stochastic DPC scheme for LTI systems in AutoRegressive with eXogeneous input (ARX) form subject to sub-Gaussian disturbances based on input--output data. For predictions, we leverage the input--output data and a disturbance data estimate to construct subspace predictors. Under a consistency constraint on the disturbance data estimate, we show that the identified predictors obey causality --- a property known to improve the performance of DPC schemes \citep{sader2025causality} --- and allow to unify DPC and model-based ARX-MPC frameworks. In contrast to \cite{pan2021stochastic}, we explicitly consider a mismatch between the true and the estimated disturbance data by deriving and leveraging prediction error bounds. The proposed DPC scheme does not rely on restrictive robust constraints as \cite{teutsch2024sampling}; Instead, we combine classical and indirect feedback formulations \citep{knaup2024recursively} for recursive feasibility and constraint satisfaction in closed-loop, conditioned on the most recent feasible measured extended-state that is constructed from past inputs and outputs. In a numerical evaluation study, we showcase the performance of the proposed controller.

\textit{Organization:}~
Section\;\ref{sec:setup} introduces the problem setup and system assumptions. In Section\;\ref{sec:main}, we discuss the connection between consistent disturbance data and causal predictors, we analyze stochastic properties of the prediction error, and we present the proposed stochastic DPC scheme. Section\;\ref{sec:eval} presents a numerical evaluation of the proposed controller. In Section\;\ref{sec:discuss}, we discuss the presented results, before we conclude the work in Section\;\ref{sec:conclusion}.

\textit{Notation:}~
We write $\bm{I}_n$ for the identity matrix of order $n$ and $\bm{0}_{n\times m}$ for a zero matrix of dimension $n \times m$. The subscript is omitted when the dimension is clear from the context. We abbreviate the set of integers $\left\{a,\,\ldots,\, b\right\}$ by $\mathbb{N}_a^b$. The pseudo-rightinverse of a matrix $\bm{S}$ is defined as $\bm{S}^{\dagger} \coloneqq \bm{S}^{\top}\left(\bm{S}\bm{S}^{\top}\right)^{-1}$. The probability measure and the expectation operator are denoted as $\Pr{\cdot}$ and $\E{\cdot}$, respectively. With $\left[\bm{S}\right]_{[i]}$, we denote the $i$-th row/element of $\bm{S}$. The Kronecker product is denoted by ``$\otimes$". By $\col{\bm{s}_a,\,\ldots,\,\bm{s}_b} \coloneqq \mat{\bm{s}^{\top}_a\,\cdots\,\bm{s}^{\top}_b}^{\top}$, we denote the result from stacking the terms $\bm{s}_a,\,\ldots,\,\bm{s}_b$. For any sequence of vectors $\mathcal{S}_T = \left\{\bm{s}_i\right\}_{i=1}^{T}$, the Hankel matrix $\hankel{L}{\mathcal{S}_T}$ of order $L \le T$ consists of columns $\col{\bm{s}_{1+i},\ldots,\bm{s}_{L+i}}$, $i\in\mathbb{N}_0^{T-L}$. For a positive definite matrix $\bm{S} \succ \bm{0}$, we define the weighted 2-norm of the vector $\bm{s}$ as $\norm{\bm{s}}_{\bm{S}} \coloneqq \sqrt{\bm{s}^{\top} \bm{S} \bm{s}}$. The 2-norm of a vector and spectral norm of a matrix are denoted by $\norm{\cdot}$. 


\section{Problem Setup} \label{sec:setup}
In this section, we introduce the problem setup consisting of the considered system class and relevant assumptions.

We consider a disturbed LTI system $\mathscr{S}$ in ARX form, i.e.,
\begin{equation}
	\bm{y}_{k} = \bm{\Phi} \bm{\xi}_{k} + \bm{\Psi} \bm{u}_{k} + \bm{d}_{k}, \label{eq:system}
\end{equation}
where $\bm{\Phi}$, $\bm{\Psi}$ are unknown system matrices, with output $\bm{y}_{k}\in \mathbb{R}^{n_y}$, input $\bm{u}_{k} \in \mathbb{R}^{n_u}$, disturbance $\bm{d}_{k} \in \mathbb{R}^{n_y}$, and \textit{extended-state} from past $T_{\mathrm{p}} \in \mathbb{N}$ inputs and outputs, i.e.,
\begin{equation}
    \bm{\xi}_{k} \coloneqq \col{\bm{u}_{k-T_{\mathrm{p}}},\ldots,\bm{u}_{k-1},\,\bm{y}_{k-T_{\mathrm{p}}},\ldots,\bm{y}_{k-1}} \in \mathbb{R}^{n_{\xi}}, \label{eq:extstate}
\end{equation}
with $n_{\xi}\coloneqq(n_u+n_y)T_{\mathrm{p}}$.
Throughout the paper, we rely on the following assumption regarding a minimal state-space representation of system \eqref{eq:system}, common in ARX-based stochastic DPC \citep{pan2023data,teutsch2024sampling}.
\begin{assumption}\label{asm:minss}
    A minimal state-space representation 
    \begin{subequations} \label{eq:system_minimal}
    \begin{align}
    	\bm{x}_{k+1} &= \tilde{\bm{A}} \bm{x}_{k} + \tilde{\bm{B}} \bm{u}_{k} + \tilde{\bm{E}}\bm{d}_{k}, \\
    	\bm{y}_k &= \tilde{\bm{C}} \bm{x}_{k} + \tilde{\bm{D}} \bm{u}_{k} + \bm{d}_{k}
    \end{align}
    \end{subequations}
    with state $\bm{x}\in\mathbb{R}^{n_x}$, controllable $(\tilde{\bm{A}},\,[\tilde{\bm{B}} ~\tilde{\bm{E}}])$, and observable $(\tilde{\bm{A}},\,\tilde{\bm{C}})$ exists, such that for some initial state $\bm{x}_{0}$, the input--output trajectories of \eqref{eq:system} and \eqref{eq:system_minimal} coincide for all disturbance sequences $\bm{d}_k,~ k \ge 0$. 
    \hfill \small{$\square$}
\end{assumption}
Asm.~\ref{asm:minss} implies that the ARX order $T_{\mathrm{p}}$ satisfies $T_{\mathrm{p}} \ge \lag{\tilde{\bm{A}},\tilde{\bm{C}}}$, with $\lag{\tilde{\bm{A}},\tilde{\bm{C}}}$ being the smallest natural number $j \le n_x$ for which the rank of the observability matrix $\col{\tilde{\bm{C}}, \tilde{\bm{C}}\tilde{\bm{A}}, \ldots, \tilde{\bm{C}} \tilde{\bm{A}}^{j-1}}$ equals the system order $n_x$. Under this observability assumption, the equivalence of the ARX system~\eqref{eq:system} and the minimal state-space system~\eqref{eq:system_minimal} is given by \cite[Lem.~2~and~3]{sadamoto2022equivalence}. For a discussion that is tailored to the presented ARX setup with disturbance $\bm{d}_k$, see \cite[Sec.~3.1]{ou2025stochastic}. Furthermore, Asm.\;\ref{asm:minss} allows for the construction of a stabilizable and detectable state-space representation of \eqref{eq:system}, i.e.,
\begin{subequations}\label{eq:system_arx_nonminimal}
\begin{align}
    \bm{\xi}_{k+1} &= \bm{A} \bm{\xi}_{k} + \bm{B} \bm{u}_{k} + \bm{E}_y^{\top}\bm{d}_{k},\\
    \bm{y}_{k} &= \bm{\Phi} \bm{\xi}_{k} + \bm{\Psi} \bm{u}_{k} + \bm{d}_{k}, 
\end{align}
\end{subequations}
with 
$\bm{A} \coloneqq \col{\bar{\bm{A}},\,\bm{\Phi}}$, $\bm{B} \coloneqq \col{\bar{\bm{B}},\,\bm{\Psi}}$, $\bm{E}_y \coloneqq \mat{\bm{0}&\bm{I}_{n_y}}$,
\begin{equation}
    \bar{\bm{A}} \coloneqq \mat{\bm{0} & \bm{I}_{(T_{\mathrm{p}}-1)n_u} & \bm{0} & \bm{0} \\ \bm{0} & \bm{0} & \bm{0} & \bm{0} \\ \bm{0} & \bm{0} & \bm{0} & \bm{I}_{(T_{\mathrm{p}}-1)n_y} },\, \bar{\bm{B}} \coloneqq \mat{\bm{0}\\ \bm{I}_{n_u} \\ \bm{0}}. \label{eq:system_arx_nonminimal_helper}
\end{equation}
This equivalent representation of \eqref{eq:system} enables us to apply state-space methods for the system analysis \citep{bongard2022robust}, which will be useful throughout the work.

The additive disturbance $\bm{d}_k \in\mathbb{R}^{n_y}$ acting on system~\eqref{eq:system} is assumed to follow a sub-Gaussian distribution, as specified by the following assumption.
\begin{assumption} \label{asm:disturb}
    The disturbances $\bm{d}_k$, $k \ge 0$, are the realizations of a zero-mean random variable $\mathbf{d}$ that is independent and identically distributed (iid) and sub-Gaussian with variance proxy $\bm{\Sigma}_d = \bm{\Sigma}_d^{\top} \succ \bm{0}$, $\ol{\sigma}_d^{2} \coloneqq \norm{\bm{\Sigma}_d} < \infty$, i.e.,
    \begin{equation}
        \E{\exp{\bm{\lambda}^{\top}\mathbf{d}}} \le \exp{0.5\norm{\bm{\lambda}}_{\bm{\Sigma}_d}^2}
    \end{equation}
    for all $\bm{\lambda}\in\mathbb{R}^{n_y}$ \citep{ao2025stochastic}.
    \hfill \small{$\square$}
\end{assumption}
The class of sub-Gaussian distributions in Asm.\;\ref{asm:disturb} includes a wide range of probability distributions, e.g., all bounded and Gaussian distributions. Specifically for bounded disturbances such that $\bm{d}\in\mathbb{D}$ with compact $\mathbb{D}$, the variance proxy can be chosen as $\bm{\Sigma}_d = \overline{\sigma}^2 \bm{I}_{n_y}$, with $\overline{\sigma} = \max_{\bm{d}\in\mathbb{D}}\norm{\bm{d}}$. If the entries of $\mathbf{d}$ are independent and bounded by $|[\mathbf{d}]_i| \le d_{i,\max}$, then $\bm{\Sigma}_d = \diag{d^2_{1,\max},\,\ldots,\,d^2_{{n_y},\max}}$ \citep{wainwright2019high}.

Due to potentially unbounded disturbances, it is difficult to impose hard constraints on system variables. Thus, system \eqref{eq:system} is subject to (conditioned) chance constraints for outputs and inputs. Specifically, given compact polytopic sets $\mathbb{Y} = \left\{\bm{y} \in \mathbb{R}^{n_y} \left| \bm{G}_y \,\bm{y} \le \bm{g}_y \hspace{-2pt}\right.\right\}$, $\mathbb{U} = \left\{\bm{u} \in \mathbb{R}^{n_u} \left| \bm{G}_u \bm{u} \le \bm{g}_u \hspace{-2pt}\right.\right\}$
containing the origin, with constraint parameters $\bm{G}_y \in \mathbb{R}^{r_y \times n_y}$, $\bm{g}_y \in \mathbb{R}^{r_y}$, $\bm{G}_u \in \mathbb{R}^{r_u \times n_u}$, $\bm{g}_u \in \mathbb{R}^{r_u}$, the individual chance constraints are defined as
\begin{subequations}\label{eq:constraints}
	\begin{align}
	\Pr{\left[\bm{G}_y\right]_i \bm{y}_{k} \le \left[\bm{g}_y\right]_i~\big|~\bm{\xi}_{\tilde{k}}}\, &\ge 1-\eps^y_{i} ~~~ \forall i\in\mathbb{N}_1^{r_y}, \label{eq:outputcons} \\
	\Pr{\left[\bm{G}_u\right]_j \bm{u}_{k} \le \left[\bm{g}_u\right]_j~\big|~\bm{\xi}_{\tilde{k}}} &\ge 1-\eps^u_{j} ~~~ \forall j\in\mathbb{N}_1^{r_u}, \label{eq:inputcons}
	\end{align}
\end{subequations}
with $\tilde{k} \in \mathbb{N}_0^k$ and the risk parameters $\eps^y_{i}$, $\eps^u_{j} \in (0,\,1]$. 
Notably, the probabilities in~\eqref{eq:constraints} are conditioned on the measurement $\bm{\xi}_{\tilde{k}}$. Thus, the chance constraints~\eqref{eq:constraints} are evaluated using the last $k-\tilde{k}+1$ random disturbances $\bm{d}_{\tilde{k}},\dots,\bm{d}_{k}$, treating the first $\tilde{k}$ realizations as fixed.
In stochastic MPC literature, a common choice of $\tilde{k}$ is $\tilde{k} = 0$ for indirect feedback approaches and $\tilde{k} = k$ for direct feedback approaches \citep{hewing2020recursively}.

In stochastic optimal control, the control input $\bm{u}_k$ is typically parameterized by some feedback law for disturbance attenuation over the prediction horizon. Particularly in this work, we consider affine control policies of the form
\begin{equation} \label{eq:inputdecomp}
    \bm{u}_k = \bm{K} \bm{\xi}_k + \bm{v}_k,
\end{equation}
with a fixed feedback gain $\bm{K}$ and free correction input $\bm{v}_k$. Hence, considering the system setup introduced thus far, we aim to solve the (conceputal) stochastic OCP
\begin{subequations} \label{eq:ocp_original}
		\begin{align}
		& \underset{\bm{v}_k,~k \,\ge\, 0}{\mathrm{minimize}} ~~~ \mathrm{E}\Bigg[\,\sum\limits_{k=0}^{\infty} \left( \norm{\bm{y}_{k}}^2_{\bm{Q}} + \norm{\bm{u}_{k}}^2_{\bm{R}} \right)\Bigg] &\label{eq:ocp_original_cost}\\		
		\mathrm{s.t.}~~ &  \eqref{eq:system},~  \eqref{eq:constraints},~\eqref{eq:inputdecomp},~\text{Asm.\;\ref{asm:disturb}} ~~\forall k\ge0,
		\end{align}
\end{subequations}
with initial condition $\bm{\xi}_0$ and weighting matrices $\bm{Q}$, $\bm{R} \succ \bm{0}$. However, the OCP \eqref{eq:ocp_original} is intractable since the system parameters $\bm{\Phi}$ and $\bm{\Psi}$ and disturbances $\bm{d}_k$ in \eqref{eq:system} are unknown in our problem setting, and due to the probabilistic formulation of the constraints~\eqref{eq:constraints}. 
Therefore, the goal of the upcoming section is to derive a DPC scheme based on a tractable receding horizon approximation of the OCP~\eqref{eq:ocp_original} that is recursively feasible and guarantees satisfaction of the chance constraints~\eqref{eq:constraints} in closed-loop.
Given that the system parameters $\bm{\Phi}$ and $\bm{\Psi}$ are unknown, we base predictions in the proposed DPC scheme on PE input–output data, specified as follows \citep{teutsch2024sampling}.
\begin{definition}\label{def:persistency}
	A trajectory $\mathcal{S}_T = \left\{\bm{s}_{i}\right\}_{i=1}^{T}$ of length $T\in\mathbb{N}$ with $\bm{s}_i \in \mathbb{R}^{n_s}$ is PE of order $L \le T$ if the Hankel matrix $\hankel{L}{\mathcal{S}_T}$ has full row-rank~$n_s L$.
    \hfill \small{$\square$}
\end{definition}
\begin{assumption}\label{asm:trajData}
    An input--output trajectory $\{\bm{u}^{\mathrm{d}}_i\}_{i=1-T_{\mathrm{p}}}^{T}$, $\{\bm{y}^{\mathrm{d}}_i\}_{i=1-T_{\mathrm{p}}}^{T}$ of system \eqref{eq:system} is available, yielding the data
    $\mathcal{U}_T \coloneqq \{\bm{u}^{\mathrm{d}}_i\}_{i=1}^{T}$, $\mathcal{Y}_T \coloneqq\{\bm{y}^{\mathrm{d}}_i\}_{i=1}^{T}$, $\mathcal{X}_{T+1} \coloneqq \{\bm{\xi}^{\mathrm{d}}_i\}_{i=1}^{T+1}$ via \eqref{eq:extstate}, and $\mathcal{V}_T \coloneqq \{\bm{v}^{\mathrm{d}}_i\}_{i=1}^{T}$ via \eqref{eq:inputdecomp}. The corresponding disturbance data $\mathcal{D}_T \coloneqq \{\bm{d}^{\mathrm{d}}_{i}\}_{i=1}^{T}$ is unknown, but the following holds:
    \begin{itemize}
        \item[(a)] the trajectory of generalized inputs $\{\col{\bm{v}^{\mathrm{d}}_i,\,\bm{d}^{\mathrm{d}}_i}\}_{i=1}^{T}$ is PE of order $n_x+T_{\mathrm{f}}+T_{\mathrm{p}}$ with horizon $T_{\mathrm{f}} \in \mathbb{N}$, 
        \item[(b)] the matrix $\col{\hankel{1}{\mathcal{X}_{T}},\,\hankel{1}{\mathcal{V}_{T}}}$ has full row rank. \hfill \small{$\square$}
    \end{itemize}
\end{assumption}
\begin{remark}
    Asm.\;\ref{asm:trajData} is not restrictive in practice: Since the (unknown) $\mathcal{D}_T$ is realized by a random process (see Asm.\;\ref{asm:disturb}), (a) can be satisfied by choosing appropriate correction terms $\bm{v}^{\mathrm{d}}_i$ and applying~\eqref{eq:inputdecomp} for the data collection. Alternatively, when input data $\mathcal{U}_{T}$ are available, $\mathcal{V}_{T} \coloneqq \{\bm{v}^{\mathrm{d}}_{i}\}_{i=1}^{T}$ can be constructed using $\bm{v}^{\mathrm{d}}_{i} = \bm{u}^{\mathrm{d}}_i - \bm{K} \bm{\xi}^{\mathrm{d}}_i$ for a given $\bm{K}$. In case the matrix in (b) does not have full rank despite the perturbation by the disturbance, defining an alternative minimal state provides a remedy \citep{alsalti2023notes}. \hfill \small{$\square$}
\end{remark}

Related works on stochastic DPC use input--disturbance--output data similar to Asm.~\ref{asm:trajData} for predictions \citep{pan2023data,yin2023stochastic,wang2022data}. However, crucially, we assume that the corresponding disturbance data $\mathcal{D}_T$ is \textit{unknown} (see Asm.~\ref{asm:trajData}), introducing uncertainty into predictions.
To mitigate this issue and provide closed-loop guarantees without the availability of the true disturbance data $\mathcal{D}_T$, we make the following assumption on the system parameters in \eqref{eq:system} and on the feedback gain $\bm{K}$ from \eqref{eq:inputdecomp}.
\begin{assumption} \label{asm:strongstab}
    Consider systems~\eqref{eq:system}~and~\eqref{eq:system_arx_nonminimal} under input~\eqref{eq:inputdecomp} and data as in Asm.\;\ref{asm:trajData}. The following holds:
    \begin{itemize}
        \item[(a)] the unknown system parameters $(\bm{\Phi},\bm{\Psi})$ are bounded by a given compact polytopic set $\mathbb{A}^{\mathrm{d}}$ according to $(\bm{A},\bm{B}) \in \mathbb{A}^{\mathrm{d}}$, with $\bm{A} = \col{\bar{\bm{A}},\,\bm{\Phi}}$ and $\bm{B} = \col{\bar{\bm{B}},\,\bm{\Psi}}$ from system \eqref{eq:system_arx_nonminimal}, and $\bar{\bm{A}}$, $\bar{\bm{B}}$ from \eqref{eq:system_arx_nonminimal_helper},
        \item[(b)] the gain $\bm{K}$ is $(c_{\xi},\gamma_{\xi})$-strongly stabilizing for all $(\bm{A},\bm{B}) \in \mathbb{A}^{\mathrm{d}}$ with constants $c_{\xi}>0$ and $\gamma_{\xi} \in (0,1)$, i.e., $\norm{\bm{A}_{\mathrm{cl}}^k} \le c_{\xi}\gamma_{\xi}^k$ for all $k\ge 0$, with $\bm{A}_{\mathrm{cl}} = \bm{A} + \bm{B}\bm{K}$,
        \item[(c)] a constant $\gamma_v \in (0,\infty)$ that satisfies
        \begin{equation}
            \max_{\bm{\xi} \in \mathbb{X},~\bm{v} + \bm{K}\bm{\xi} \in \mathbb{U}} \norm{\mat{\hankel{1}{\mathcal{X}_{T}}\\\hankel{1}{\mathcal{V}_{T}}}^{\dagger}\vc{\bm{\xi}\\\bm{v}}} \le \gamma_v\label{eq:inputstatebound}
        \end{equation}
        is known, with $\mathbb{X}$ being the constraint set for the extended-state~\eqref{eq:extstate} corresponding to the input and output constraint sets $\mathbb{U}$ and $\mathbb{Y}$. \hfill \small{$\square$} 
    \end{itemize}
\end{assumption}
\begin{remark} \label{rem:strongstab}
    Assuming bounds on the system parameters as in (a) is common in robust and stochastic predictive control \citep{lorenzen2019robust,arcari2023stochastic,teutsch2024adaptive}. Such parameter bounds can be obtained using bounds on the disturbance data $\mathcal{D}_T$ \citep{berberich2020robust,alanwar2023data,van2023informativity,teutsch2024sampling}. In case that a compact set $\mathbb{A}^{\mathrm{d}}$ for (a) cannot be obtained (e.g., due to unbounded disturbances), $\mathbb{A}^{\mathrm{d}}$ can be replaced by a confidence set \citep{van2023quadratic}. Furthermore, strong stability \citep{cohen2018online,kerz2024safe} in (b) is related to quadratic stability of linear systems (i.e., existence of a common Lyapunov function over a bounded set of system models), which is also a common assumption in robust and stochastic predictive control \citep{lorenzen2019robust,arcari2023stochastic,teutsch2024sampling}. Essentially, the notion of strong stability used here quantifies decay rates of quadratic stability via the constants $c_{\xi}$ and $\gamma_{\xi}$. Just as for quadratic stability, a suitable gain $\bm{K}$ and corresponding constants $c_{\xi}$ and $\gamma_{\xi}$ can be derived with the help of linear matrix inequalities involving the vertices of the model set $\mathbb{A}^{\mathrm{d}}$, as detailed in \ref{app:strongstab}.
    Lastly, the constant $\gamma_v$ can readily be computed via the left-hand side of \eqref{eq:inputstatebound} using the vertices of the set $\{\bm{\xi},\bm{v} \,|\,\bm{\xi} \in \mathbb{X},~\bm{v} + \bm{K}\bm{\xi} \in \mathbb{U}\}$. \hfill \small{$\square$}
\end{remark}


\section{Method} \label{sec:main}
In this section, we will derive the proposed stochastic DPC scheme for system~\eqref{eq:system}. Towards this goal, we first revisit data-driven system representations that we use as the basis for predictions from data (see Asm.~\ref{asm:trajData}). In Section~\ref{sec:consistency}, we discuss the concept of consistent disturbance data $\mathcal{D}_T$ from \cite{teutsch2024sampling} and generalize it to causal multi-step predictions (Prop.~\ref{prop:consistency_multistep}). This will be used in Section~\ref{sec:predictors} to derive causal data-driven predictors based on a disturbance data estimate $\hat{\mathcal{D}}_T$, since the true disturbance data $\mathcal{D}_T$ is unknown (see Asm.~\ref{asm:trajData}). In Section~\ref{sec:prederr}, we analyze statistical properties of the corresponding prediction error (Lem.~\ref{lem:prederror_data} and \ref{lem:prederror_add}), leveraging the disturbance properties from Asm.~\ref{asm:disturb} and the system parameter bounds from Asm.~\ref{asm:strongstab}. This allows us to define tightened constraints for the proposed nominal predictions (Lem.~\ref{lem:constight}), functioning as a deterministic approximation of the chance constraints~\eqref{eq:constraints}. Lastly, we present the proposed stochastic DPC scheme and its theoretical guarantees (Thm.~\ref{thm:properties}) based on the proposed data-driven predictors and tightened constraints.

Consider a system $\mathscr{S}$ of the form \eqref{eq:system} satisfying Asm.\;\ref{asm:minss}, with input \eqref{eq:inputdecomp} and data $\mathcal{V}_T$, $\mathcal{D}_T$, $\mathcal{Y}_T$, and $\mathcal{X}_{T}$ satisfying Asm.\;\ref{asm:trajData}(a). Via \cite[Lem.~2]{teutsch2024sampling}, derived from the fundamental lemma by \cite{willems2005note} and similarly presented by \cite{pan2023data,pan2021stochastic}, any length-$\left(T_{\mathrm{p}} + T_{\mathrm{f}}\right)$ input-disturbance-output trajectory $\left\{\bm{v}_{i}\right\}_{i=k-T_{\mathrm{p}}}^{k+T_{\mathrm{f}}-1}$, $\left\{\bm{d}_{i}\right\}_{i=k}^{k+T_{\mathrm{f}}-1}$, $\left\{\bm{y}_{i}\right\}_{i=k-T_{\mathrm{p}}}^{k+T_{\mathrm{f}}-1}$ is a valid trajectory of $\mathscr{S}$ for $k \ge 0$ if and only if there exists $\bm{\alpha} \in \mathbb{R}^{T-T_{\mathrm{f}}+1}$ such that
\begin{equation} \label{eq:extfundlemm}
    \vc{\bm{\xi}_k \\ \col{\bm{v}_{k},\,\ldots,\,\bm{v}_{k+T_{\mathrm{f}}-1}}\\ \col{\bm{d}_{k},\,\ldots,\,\bm{d}_{k+T_{\mathrm{f}}-1}}\\ \col{\bm{y}_{k},\,\ldots,\,\bm{y}_{k+T_{\mathrm{f}}-1}}} = \mat{\hankel{1}{\mathcal{X}_{T-T_{\mathrm{f}}+1}} \\ \hankel{T_{\mathrm{f}}}{\mathcal{V}_{T}} \\ \hankel{T_{\mathrm{f}}}{\mathcal{D}_{T}} \\ \hankel{T_{\mathrm{f}}}{\mathcal{Y}_{T}}} \bm{\alpha},
\end{equation}
with $\bm{\xi}_k$ and $\mathcal{X}_{T-T_{\mathrm{f}}+1} = \{\bm{\xi}^{\mathrm{d}}_i\}_{i=1}^{T-T_{\mathrm{f}}+1}$ according to \eqref{eq:extstate}.
By varying $\bm{\alpha}$, the data-driven system representation \eqref{eq:extfundlemm} enables predictions when disturbance data $\mathcal{D}_{T}$ are available, which notably is not the case in this work (see Asm.\;\ref{asm:trajData}). To still use \eqref{eq:extfundlemm} for predictions in this case, two approaches have been presented in the literature: \cite{teutsch2024sampling} employ data-consistent samples of the unknown disturbance data $\mathcal{D}_{T}$, whereas \cite{pan2023data,pan2021stochastic} derive a data-consistent estimate $\hat{\mathcal{D}}_{T}$ of $\mathcal{D}_{T}$. While the former requires a sample generation mechanism, the latter suffers from the inexactness of the disturbance data estimate, resulting in a loss of closed-loop guarantees. In this work, we overcome these issues by explicitly analyzing and leveraging the prediction error that results from using an estimate $\hat{\mathcal{D}}_{T}$ instead of the unknown disturbance data $\mathcal{D}_{T}$ for predictors based on \eqref{eq:extfundlemm}. To this end, let us first revisit the concept of consistent disturbance data $\mathcal{D}_T$ from \cite{teutsch2024sampling} and extend it to multi-step predictions as via \eqref{eq:extfundlemm}.

\subsection{Consistent Disturbance Data} \label{sec:consistency}
Although the disturbance data $\mathcal{D}_{T}$ are unknown, we know that they need to be consistent with the available input--output data from Asm.\;\ref{asm:trajData} and with the system class \eqref{eq:system}. Specifically, consider the data matrices
\begin{align*}
    \hankel{1}{\mathcal{X}_{T}} = \mat{\bm{\xi}^{\mathrm{d}}_1 & \cdots & \bm{\xi}^{\mathrm{d}}_T},\, \hankel{1}{\mathcal{Y}_{T}} = \mat{\bm{y}^{\mathrm{d}}_1 & \cdots & \bm{y}^{\mathrm{d}}_T}, \\
    \hankel{1}{\mathcal{V}_{T}} = \mat{\bm{v}^{\mathrm{d}}_1 & \cdots & \bm{v}^{\mathrm{d}}_T},\, \hankel{1}{\mathcal{D}_{T}} = \mat{\bm{d}^{\mathrm{d}}_1 & \cdots & \bm{d}^{\mathrm{d}}_T}.
\end{align*}
As the available input--output data $\mathcal{V}_{T}$, $\mathcal{Y}_{T}$ and unknown disturbance data $\mathcal{D}_{T}$ stem from system \eqref{eq:system} with input parameterization~\eqref{eq:inputdecomp}, the above data matrices must obey
\begin{equation}
	\hankel{1}{\mathcal{Y}_{T}} = \bm{\Phi}_{\mathrm{cl}} \hankel{1}{\mathcal{X}_{T}} + \bm{\Psi} \hankel{1}{\mathcal{V}_{T}} + \hankel{1}{\mathcal{D}_{T}}, \label{eq:dynamics_data}
\end{equation}
with $\bm{\Phi}_{\mathrm{cl}} \coloneqq \bm{\Phi} + \bm{\Psi}\bm{K}$.
Equation \eqref{eq:dynamics_data} allows us to define a constraint on the disturbance data $\mathcal{D}_{T}$ ensuring consistency with the available input--output data $\mathcal{V}_{T}$, $\mathcal{Y}_{T}$ and the underlying system class \eqref{eq:system}, which has been widely used in the literature \citep{pan2021stochastic,berberich2020robustfb,alanwar2023data,teutsch2024sampling}. 
This is specified by the following result \cite[Prop.\;1]{teutsch2024sampling}.
\begin{proposition} \label{prop:consistency}
    Consider data $\mathcal{V}_{T}$, $\mathcal{D}_{T}$, $\mathcal{Y}_{T}$ of system \eqref{eq:system} under input~\eqref{eq:inputdecomp} satisfying Asm.\;\ref{asm:trajData}(b) with unknown $\mathcal{D}_{T}$. Then,
    \begin{equation}
        \left(\hankel{1}{\mathcal{Y}_{T}} - \hankel{1}{\mathcal{D}_{T}}\right) \bm{\Pi}^{v} = \bm{0}\label{eq:consistency}
    \end{equation}
    with $\bm{\Pi}^{v} \coloneqq \bm{I}_T - \bm{S}_1^{\dagger}\bm{S}_1$ and $\bm{S}_1 \coloneqq \col{\hankel{1}{\mathcal{X}_{T}},\hankel{1}{\mathcal{V}_{T}}}$.

    Further, any $\hat{\mathcal{D}}_{T}$ satisfying \eqref{eq:consistency} implicitly determines parameters $\hat{\bm{\Phi}}_{\mathrm{cl}}$, $\hat{\bm{\Psi}}$ of a system of form \eqref{eq:system}, i.e.,
    \begin{equation}
        \mat{\hat{\bm{\Phi}}_{\mathrm{cl}} & \hat{\bm{\Psi}}} = \left(\hankel{1}{\mathcal{Y}_{T}} - \hankel{1}{\hat{\mathcal{D}}_{T}}\right) \bm{S}_1^{\dagger}, \label{eq:connection_sysparams_data}
    \end{equation}
    such that \eqref{eq:dynamics_data} is satisfied considering the input \eqref{eq:inputdecomp}. \hfill \small{$\square$}
\end{proposition}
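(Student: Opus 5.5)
The plan is to exploit the data relation \eqref{eq:dynamics_data} together with the fact that $\bm{\Pi}^{v}$ is the orthogonal projector onto the kernel of $\bm{S}_1$. The argument has two parts, one for each claim of Prop.~\ref{prop:consistency}.

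\emph{First claim.} Asm.~\ref{asm:trajData}(b) says that $\bm{S}_1 = \col{\hankel{1}{\mathcal{X}_{T}},\hankel{1}{\mathcal{V}_{T}}}$ has full row rank. Hence $\bm{S}_1\bm{S}_1^{\top}$ is invertible, and the pseudo-rightinverse $\bm{S}_1^{\dagger} = \bm{S}_1^{\top}(\bm{S}_1\bm{S}_1^{\top})^{-1}$ is well defined. It satisfies $\bm{S}_1\bm{S}_1^{\dagger} = \bm{I}$. It follows that
\begin{equation*}
    \bm{S}_1 \bm{\Pi}^{v} = \bm{S}_1 - \bm{S}_1\bm{S}_1^{\dagger}\bm{S}_1 = \bm{0}.
\end{equation*}
Next I rewrite \eqref{eq:dynamics_data}, which follows from \eqref{eq:system} after substituting $\bm{u}^{\mathrm{d}}_i = \bm{K}\bm{\xi}^{\mathrm{d}}_i + \bm{v}^{\mathrm{d}}_i$, in the form
\begin{equation*}
    \hankel{1}{\mathcal{Y}_{T}} - \hankel{1}{\mathcal{D}_{T}} = \mat{\bm{\Phi}_{\mathrm{cl}} & \bm{\Psi}}\bm{S}_1 .
\end{equation*}
Right-multiplying both sides by $\bm{\Pi}^{v}$ and using $\bm{S}_1\bm{\Pi}^{v}=\bm{0}$ gives \eqref{eq:consistency} directly.

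\emph{Second claim.} Take any $\hat{\mathcal{D}}_{T}$ satisfying \eqref{eq:consistency}, and define $\mat{\hat{\bm{\Phi}}_{\mathrm{cl}} & \hat{\bm{\Psi}}}$ by \eqref{eq:connection_sysparams_data}. I must show that
\begin{equation*}
    \hankel{1}{\mathcal{Y}_{T}} - \hankel{1}{\hat{\mathcal{D}}_{T}} = \mat{\hat{\bm{\Phi}}_{\mathrm{cl}} & \hat{\bm{\Psi}}}\bm{S}_1 .
\end{equation*}
Write $\bm{M} \coloneqq \hankel{1}{\mathcal{Y}_{T}} - \hankel{1}{\hat{\mathcal{D}}_{T}}$. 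The right-hand side equals $\bm{M}\bm{S}_1^{\dagger}\bm{S}_1 = \bm{M}(\bm{I}_T - \bm{\Pi}^{v}) = \bm{M} - \bm{M}\bm{\Pi}^{v}$. By \eqref{eq:consistency} the last term vanishes, so the right-hand side is $\bm{M}$, as required.

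It remains to interpret this equality as the data relation \eqref{eq:dynamics_data} for a system of form \eqref{eq:system}. Set $\hat{\bm{\Phi}} \coloneqq \hat{\bm{\Phi}}_{\mathrm{cl}} - \hat{\bm{\Psi}}\bm{K}$. Then the system $\bm{y}_k = \hat{\bm{\Phi}}\bm{\xi}_k + \hat{\bm{\Psi}}\bm{u}_k + \bm{d}_k$ with input \eqref{eq:inputdecomp} reproduces the data exactly with disturbances $\hat{\mathcal{D}}_T$.

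There is no real obstacle in this argument. The only points needing care are that Asm.~\ref{asm:trajData}(b) is what makes $\bm{S}_1^{\dagger}$ well defined, and that consistency is used exactly in the form $\bm{M}\bm{\Pi}^{v}=\bm{0}$.
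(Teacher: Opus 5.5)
Your proof is correct and follows essentially the same argument as the paper. The paper cites earlier work for Prop.~\ref{prop:consistency}, but its proof of the multi-step generalization (Prop.~\ref{prop:consistency_multistep}, in \ref{app:proof_consistency}) uses your two steps with $\bm{S}_2$ in place of $\bm{S}_1$: it applies $\bm{S}_2\bm{S}_2^{\dagger}\bm{S}_2=\bm{S}_2$ to the data equation to get consistency, and it uses that $\bm{\Pi}^{v}_{T_{\mathrm{f}}}$ projects onto the null space of $\bm{S}_2$, so a consistent residual lies in the row space of $\bm{S}_2$, which gives the parameter formula; your check $\bm{M}\bm{S}_1^{\dagger}\bm{S}_1=\bm{M}-\bm{M}\bm{\Pi}^{v}=\bm{M}$ is the explicit form of that row-space step for $T_{\mathrm{f}}=1$.
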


Prop.\;\ref{prop:consistency} allows to define disturbance data \emph{consistent} with the available input--output data and system class, namely any disturbance data $\mathcal{D}_{T}$ that satisfies~\eqref{eq:consistency}. 
Crucially, Prop.\;\ref{prop:consistency} only considers Hankel matrices of depth\;$1$, and thus produces $1$-step predictors via relation~\eqref{eq:connection_sysparams_data}. In contrast, \eqref{eq:extfundlemm} functions as a direct $T_{\mathrm{f}}$-step predictor in DPC. To investigate such $T_{\mathrm{f}}$-step predictors, we generalize Prop.\;\ref{prop:consistency} to Hankel matrices of depth $T_{\mathrm{f}} \in \mathbb{N}$ next. 

Given the extended-state-space representation~\eqref{eq:system_arx_nonminimal} with input parameterization \eqref{eq:inputdecomp}, let us define the extended observability matrix $\mathcal{O}_{T_{\mathrm{f}}} \coloneqq \col{\bm{\Phi}_{\mathrm{cl}}, \bm{\Phi}_{\mathrm{cl}} \bm{A}_{\mathrm{cl}}, \ldots, \bm{\Phi}_{\mathrm{cl}} \bm{A}_{\mathrm{cl}}^{T_{\mathrm{f}}-1}}$ and the Toeplitz matrices $\mathcal{T}_{T_{\mathrm{f}}}^{v} \coloneqq \mathcal{T}_{T_{\mathrm{f}}}(\bm{A}_{\mathrm{cl}},\,\bm{B},\,\bm{\Phi}_{\mathrm{cl}},\,\bm{\Psi})$, $\mathcal{T}_{T_{\mathrm{f}}}^{d} \coloneqq \mathcal{T}_{T_{\mathrm{f}}}(\bm{A}_{\mathrm{cl}},\,\bm{E}_y^\top,\,\bm{\Phi}_{\mathrm{cl}},\,\bm{I}_{n_y})$ with $\bm{A}_{\mathrm{cl}} \coloneqq \bm{A} + \bm{B}\bm{K}$ and
\begin{equation} \label{eq:toeplitz}
    \mathcal{T}_{T_{\mathrm{f}}}(\bm{A},\,\bm{B},\,\bm{C},\,\bm{D}) \coloneqq  \mat{\bm{D} & \bm{0} & \cdots & \bm{0} \\ \bm{C}\bm{B} & \bm{D} & \ddots & \vdots \\ \vdots & \ddots & \ddots & \bm{0} \\ \bm{C}\bm{A}^{T_{\mathrm{f}}-2}\bm{B} & \cdots & \bm{C}\bm{B} & \bm{D}}.
\end{equation}
This yields the $T_{\mathrm{f}}$-step form of the system \eqref{eq:system} as
\begin{equation} \label{eq:system_multistep}
    \bm{y}_{\mathrm{f},k} = \mathcal{O}_{T_{\mathrm{f}}} \bm{\xi}_k + \mathcal{T}_{T_{\mathrm{f}}}^{v} \bm{v}_{\mathrm{f},k}  + \mathcal{T}_{T_{\mathrm{f}}}^{d} \bm{d}_{\mathrm{f},k},
\end{equation}
with vectors of future outputs $\bm{y}_{\mathrm{f},k} \coloneqq \col{\bm{y}_{k},\,\ldots,\,\bm{y}_{k+T_{\mathrm{f}}-1}}$, future correction inputs $\bm{v}_{\mathrm{f},k} \coloneqq \col{\bm{v}_{k},\,\ldots,\,\bm{v}_{k+T_{\mathrm{f}}-1}}$, and future disturbances $\bm{d}_{\mathrm{f},k} \coloneqq \col{\bm{d}_{k},\,\ldots,\,\bm{d}_{k+T_{\mathrm{f}}-1}}$. We remark that the corresponding $T_{\mathrm{f}}$-step form of the extended-state dynamics in \eqref{eq:system_arx_nonminimal} for the vector of future extended-states $\bm{\xi}_{\mathrm{f},k} \coloneqq \col{\bm{\xi}_{k+1},\,\ldots,\,\bm{\xi}_{k+T_{\mathrm{f}}}}$ follows from \eqref{eq:system_multistep} by leveraging $\bm{\Phi}_{\mathrm{cl}} = \bm{E}_y\bm{A}_{\mathrm{cl}}$ (respectively, $\bm{y}_k = \bm{E}_y \bm{\xi}_{k+1}$) and omitting all $\bm{E}_y$ in \eqref{eq:system_multistep} that are multplied from the left. Similarly, the $T_{\mathrm{f}}$-step form of the input parameterization \eqref{eq:inputdecomp} follows by leveraging $\bm{K} = \bm{E}_u\bm{A}_{\mathrm{cl}}$ (respectively, $\bm{u}_k = \bm{E}_u \bm{\xi}_{k+1}$), with $\bm{E}_u \coloneqq \mat{\bm{0}_{n_u \times n_u(T_{\mathrm{p}}-1)}& \bm{I}_{n_u}& \bm{0}_{n_u\times n_yT_{\mathrm{p}}} }$. 
Following \cite{de2019formulas}, we can write the $T_{\mathrm{f}}$-step version of \eqref{eq:dynamics_data} (i.e., the data version of \eqref{eq:system_multistep}) as
\begin{equation} \label{eq:dynamics_data_multistep}
    \hankel{T_{\mathrm{f}}}{\mathcal{Y}_{T}} = \mathcal{O}_{T_{\mathrm{f}}} \hankel{1}{\mathcal{X}_{\tilde{T}}} + \mathcal{T}_{T_{\mathrm{f}}}^{v} \hankel{T_{\mathrm{f}}}{\mathcal{V}_{T}} + \mathcal{T}_{T_{\mathrm{f}}}^{d}\hankel{T_{\mathrm{f}}}{\mathcal{D}_{T}},
\end{equation}
with $\tilde{T} \coloneqq T-T_{\mathrm{f}}+1$.
The next result generalizes Prop.\;\ref{prop:consistency}.
\begin{proposition} \label{prop:consistency_multistep}
    Consider data $\mathcal{V}_{T}$, $\mathcal{D}_{T}$, $\mathcal{Y}_{T}$ of system \eqref{eq:system} satisfying Asm.\;\ref{asm:trajData} with unknown $\mathcal{D}_{T}$. Then,
    \begin{equation}
        \left(\hankel{T_{\mathrm{f}}}{\mathcal{Y}_{T}} - \mathcal{T}_{T_{\mathrm{f}}}^{d}\hankel{T_{\mathrm{f}}}{\mathcal{D}_{T}}\right)\bm{\Pi}^{v}_{T_{\mathrm{f}}} = \bm{0}\label{eq:consistency_multistep}
    \end{equation}
    with $\bm{\Pi}^{v}_{T_{\mathrm{f}}} \coloneqq \bm{I}_{\tilde{T}} - \bm{S}_{2}^{\dagger}\bm{S}_{2}$ and $\bm{S}_{2} \coloneqq \col{\hankel{1}{\mathcal{X}_{\tilde{T}}},\hankel{{T_{\mathrm{f}}}}{\mathcal{V}_{T}}}$.

    Further, any $\hat{\mathcal{D}}_{T}$ satisfying \eqref{eq:consistency} and Asm.\;\ref{asm:trajData} implicitly determines matrices $\hat{\mathcal{O}}_{T_{\mathrm{f}}}$, $\hat{\mathcal{T}}_{T_{\mathrm{f}}}^{v}$, $\hat{\mathcal{T}}_{T_{\mathrm{f}}}^{d}$ of a $T_{\mathrm{f}}$-step system \eqref{eq:system_multistep} such that \eqref{eq:dynamics_data_multistep} and \eqref{eq:consistency_multistep} are satisfied, with
    \begin{align}
        &\mat{\hat{\mathcal{O}}_{T_{\mathrm{f}}}\, \hat{\mathcal{T}}_{T_{\mathrm{f}}}^{v}} = \big(\hankel{T_{\mathrm{f}}}{\mathcal{Y}_{T}} - \hat{\mathcal{T}}_{T_{\mathrm{f}}}^{d}\hankel{T_{\mathrm{f}}}{\hat{\mathcal{D}}_{T}}\big) \bm{S}_{2}^{\dagger}, \label{eq:connection_sysparams_data_multistep}\\
        &\mat{\hat{\mathcal{O}}_{T_{\mathrm{f}}} & \hat{\mathcal{T}}_{T_{\mathrm{f}}}^{v} & \hat{\mathcal{T}}_{T_{\mathrm{f}}}^{d}} = \hankel{T_{\mathrm{f}}}{\mathcal{Y}_{T}} \col{\bm{S}_{2},\hankel{T_{\mathrm{f}}}{\hat{\mathcal{D}}_{T}}}^{\dagger},\label{eq:connection_sysparams_data_multistep_2}
    \end{align}
     where \eqref{eq:connection_sysparams_data_multistep} and \eqref{eq:connection_sysparams_data_multistep_2} are equivalent.
    \hfill \small{$\square$}
\end{proposition}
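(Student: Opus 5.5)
We prove the first claim directly from the multi-step data equation \eqref{eq:dynamics_data_multistep}. Rearranging gives
\begin{equation*}
\hankel{T_{\mathrm{f}}}{\mathcal{Y}_{T}} - \mathcal{T}_{T_{\mathrm{f}}}^{d}\hankel{T_{\mathrm{f}}}{\mathcal{D}_{T}} = \mat{\mathcal{O}_{T_{\mathrm{f}}} & \mathcal{T}_{T_{\mathrm{f}}}^{v}} \bm{S}_2 .
\end{equation*}
Right-multiplying by $\bm{\Pi}^{v}_{T_{\mathrm{f}}}$ then yields \eqref{eq:consistency_multistep}, since $\bm{S}_2\bm{\Pi}^{v}_{T_{\mathrm{f}}} = \bm{S}_2 - \bm{S}_2\bm{S}_2^{\dagger}\bm{S}_2 = \bm{0}$. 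This requires $\bm{S}_2$ to have full row rank, so that $\bm{S}_2^{\dagger}$ is the right inverse as defined in the notation. We obtain this from Asm.~\ref{asm:trajData}(a): the generalized input $\col{\bm{v},\bm{d}}$ is PE of order $n_x+T_{\mathrm{f}}+T_{\mathrm{p}}$. By the fundamental lemma, applied to the minimal realization of Asm.~\ref{asm:minss} with $\bm{\xi}$ a linear image of $(\bm{x}_{k-T_{\mathrm{p}}}, \bm{v}, \bm{d})$ over the past window, the matrix $\col{\hankel{1}{\mathcal{X}_{\tilde T}},\hankel{T_{\mathrm{f}}}{\mathcal{V}_T},\hankel{T_{\mathrm{f}}}{\mathcal{D}_T}}$ has full row rank modulo the redundancy inside $\bm{\xi}$. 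Asm.~\ref{asm:trajData}(b) is the rank condition ensuring that $\bm{\xi}$ itself carries no linear redundancy. Together these give full row rank of $\bm{S}_3 \coloneqq \col{\bm{S}_2,\hankel{T_{\mathrm{f}}}{\mathcal{D}_T}}$, and hence of $\bm{S}_2$.

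For the second part, we take $\hat{\mathcal{D}}_T$ consistent in the one-step sense \eqref{eq:consistency} and build a system from it. Prop.~\ref{prop:consistency} supplies parameters $\hat{\bm{\Phi}}_{\mathrm{cl}},\hat{\bm{\Psi}}$ via \eqref{eq:connection_sysparams_data} such that \eqref{eq:dynamics_data} holds with hats. Setting $\hat{\bm{\Phi}} = \hat{\bm{\Phi}}_{\mathrm{cl}} - \hat{\bm{\Psi}}\bm{K}$, we form $\hat{\bm{A}}_{\mathrm{cl}}$ from the structure \eqref{eq:system_arx_nonminimal_helper} and define $\hat{\mathcal{O}}_{T_{\mathrm{f}}},\hat{\mathcal{T}}^{v}_{T_{\mathrm{f}}},\hat{\mathcal{T}}^{d}_{T_{\mathrm{f}}}$ by the formulas preceding \eqref{eq:system_multistep}. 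The one-step identity for each column holds, and the data $\mathcal{X}$ is generated by the shift structure of $\hat{\bm{A}}_{\mathrm{cl}}$ together with the output rows. Iterating the one-step relation column by column, as in \cite{de2019formulas}, then gives \eqref{eq:dynamics_data_multistep} with hats. Consequently \eqref{eq:consistency_multistep} holds for $\hat{\mathcal{D}}_T$ and $\hat{\mathcal{T}}^d_{T_{\mathrm{f}}}$, by the argument of the first part.

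It remains to obtain the two explicit formulas. From the hatted \eqref{eq:dynamics_data_multistep},
\begin{equation*}
\hankel{T_{\mathrm{f}}}{\mathcal{Y}_T} = \mat{\hat{\mathcal{O}}_{T_{\mathrm{f}}} & \hat{\mathcal{T}}^v_{T_{\mathrm{f}}} & \hat{\mathcal{T}}^d_{T_{\mathrm{f}}}} \hat{\bm{S}}_3 , \qquad \hat{\bm{S}}_3 \coloneqq \col{\bm{S}_2,\hankel{T_{\mathrm{f}}}{\hat{\mathcal{D}}_T}} .
\end{equation*}
Moving the disturbance term to the left and right-multiplying by $\bm{S}_2^{\dagger}$ gives \eqref{eq:connection_sysparams_data_multistep}. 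Right-multiplying instead by $\hat{\bm{S}}_3^{\dagger}$ gives \eqref{eq:connection_sysparams_data_multistep_2}, provided $\hat{\bm{S}}_3$ has full row rank; this is the meaning of the requirement that $\hat{\mathcal{D}}_T$ satisfy Asm.~\ref{asm:trajData}. The two formulas are equivalent because both recover the unique solution of the same linear equation. Conversely, given \eqref{eq:connection_sysparams_data_multistep_2}, the full row rank of $\hat{\bm{S}}_3$ makes $\hat{\bm{S}}_3^{\dagger}\hat{\bm{S}}_3$ act as the identity on the relevant row space. Partitioning the matrix into its block columns then reproduces \eqref{eq:connection_sysparams_data_multistep}.

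The main obstacle is the rank argument: full row rank of $\bm{S}_2$, and of $\hat{\bm{S}}_3$ for the estimate, must follow from Asm.~\ref{asm:trajData}, where extended-state redundancy makes the fundamental-lemma argument delicate. We would first attempt to map $\bm{\xi}$ to $(\bm{x},\text{past inputs})$ via observability and use (b) to handle the deficiency. A secondary subtlety is checking that the iterated one-step system reproduces the block-Toeplitz structure of \eqref{eq:toeplitz}, in particular causality with zero upper blocks, which makes the identified matrices genuinely those of a $T_{\mathrm{f}}$-step system.
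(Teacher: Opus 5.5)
Your proof is correct. For the first claim, and for obtaining the hatted multi-step data equation, it follows the paper's argument. You rearrange \eqref{eq:dynamics_data_multistep} and use $\bm{S}_2\bm{\Pi}^{v}_{T_{\mathrm{f}}}=\bm{0}$. You then build $\hat{\mathcal{O}}_{T_{\mathrm{f}}},\hat{\mathcal{T}}^{v}_{T_{\mathrm{f}}},\hat{\mathcal{T}}^{d}_{T_{\mathrm{f}}}$ from the one-step parameters of Prop.~\ref{prop:consistency} and iterate column by column. That iteration is purely algebraic, because the data satisfy the hatted extended-state recursion exactly. So the Toeplitz/causal structure you list as a secondary subtlety holds by construction.

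The genuine difference is in proving that \eqref{eq:connection_sysparams_data_multistep} and \eqref{eq:connection_sysparams_data_multistep_2} are equivalent.
\begin{itemize}
\item The paper splits the pseudo-inverse of the stacked matrix blockwise, $\col{\bm{S}_2,\hankel{T_{\mathrm{f}}}{\hat{\mathcal{D}}_T}}^{\dagger}=\mat{(\bm{S}_2\bm{\Pi}^{d}_{T_{\mathrm{f}}})^{\dagger} & (\hankel{T_{\mathrm{f}}}{\hat{\mathcal{D}}_T}\bm{\Pi}^{v}_{T_{\mathrm{f}}})^{\dagger}}$. It then verifies each block separately, using \eqref{eq:consistency_multistep} and the data equation.
\item You right-multiply the hatted data equation by $\hat{\bm{S}}_3^{\dagger}$ and appeal to uniqueness of the solution $\bm{\Theta}$ of $\hankel{T_{\mathrm{f}}}{\mathcal{Y}_T}=\bm{\Theta}\hat{\bm{S}}_3$.
\end{itemize}
Your route is shorter and needs nothing beyond $\hat{\bm{S}}_3\hat{\bm{S}}_3^{\dagger}=\bm{I}$. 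The paper's route exhibits $\hat{\mathcal{T}}^{d}_{T_{\mathrm{f}}}$ explicitly as a function of the projected data $\hankel{T_{\mathrm{f}}}{\mathcal{Y}_T}\bm{\Pi}^{v}_{T_{\mathrm{f}}}$ and $\hankel{T_{\mathrm{f}}}{\hat{\mathcal{D}}_T}\bm{\Pi}^{v}_{T_{\mathrm{f}}}$. This makes visible that \eqref{eq:consistency_multistep} alone pins down $\hat{\mathcal{T}}^{d}_{T_{\mathrm{f}}}$ for any candidate. You recover the same conclusion by adding one line: for an arbitrary candidate $\hat{\mathcal{T}}^{d}_{T_{\mathrm{f}}}$, \eqref{eq:consistency_multistep} places the rows of $\hankel{T_{\mathrm{f}}}{\mathcal{Y}_T}-\hat{\mathcal{T}}^{d}_{T_{\mathrm{f}}}\hankel{T_{\mathrm{f}}}{\hat{\mathcal{D}}_T}$ in the row space of $\bm{S}_2$.

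On the rank conditions you are at the paper's level of rigor: the paper simply asserts full row rank of $\bm{S}_2$ (and implicitly of $\hat{\bm{S}}_3$) by citation. Your sketch, however, should not be read as a proof. Under $\bm{u}=\bm{K}\bm{\xi}+\bm{v}$, the state $\bm{\xi}_k$ depends on $\bm{\xi}_{k-T_{\mathrm{p}}}$ through $\bm{u}_{k-T_{\mathrm{p}}}$. It is therefore not a linear image of $(\bm{x}_{k-T_{\mathrm{p}}},\bm{v},\bm{d})$ over the past window. In addition, Asm.~\ref{asm:trajData}(b) is only a depth-one condition.
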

\begin{proof}
    See \ref{app:proof_consistency}.
    \hfill \small{$\blacksquare$}
\end{proof}

Evidently, for a candidate $\hat{\mathcal{D}}_T$, direct verification of the $T_{\mathrm{f}}$-step consistency constraint~\eqref{eq:consistency_multistep} is hindered by the unknown entries of the corresponding Toeplitz matrix~$\hat{\mathcal{T}}_{T_{\mathrm{f}}}^{d}$. However, via Prop.\;\ref{prop:consistency_multistep}, satisfaction of the $1$-step consistency constraint~\eqref{eq:consistency} in conjunction with the PE conditions from Asm.\;\ref{asm:trajData} implies $T_{\mathrm{f}}$-step consistency \eqref{eq:consistency_multistep} of $\hat{\mathcal{D}}_T$. Note that this fact has implicitly been used in our previous work \citep{teutsch2024sampling}, but not formally proven.

When leveraging a consistent estimate $\hat{\mathcal{D}}_{T}$ via Prop.\;\ref{prop:consistency_multistep} to build predictors based on \eqref{eq:extfundlemm}, the $T_{\mathrm{f}}$-step consistency constraint~\eqref{eq:consistency} enforces predictions to obey causality, expressed via the lower-triangular structure in the Toeplitz matrices in \eqref{eq:dynamics_data_multistep} and \eqref{eq:connection_sysparams_data_multistep}, \eqref{eq:connection_sysparams_data_multistep_2}. Evidently, the Toeplitz structure also renders the corresponding predictors inherently time-invariant. In the following section, we will derive such causal and time-invariant predictors.

\subsection{Causal Predictors} \label{sec:predictors}
We now identify output predictors based on \eqref{eq:extfundlemm} and Prop.\;\ref{prop:consistency_multistep}. Consider the vectors of future correction inputs $\bm{v}_{\mathrm{f},k}$, disturbances $\bm{d}_{\mathrm{f},k}$, outputs $\bm{y}_{\mathrm{f},k}$, and extended-states $\bm{\xi}_{\mathrm{f},k}$ from \eqref{eq:system_multistep}, and consider the true (unknown) disturbance data $\mathcal{D}_T$. As discussed in our previous work \citep{teutsch2024sampling}, there exists an $\bm{\alpha}$ that satisfies~\eqref{eq:extfundlemm} with fixed initial condition $\bm{\xi}_k$ and sequence of correction inputs $\bm{v}_{\mathrm{f},k}$.
More specifically, all solutions for $\bm{\alpha}$ can be expressed as
\begin{equation} \label{eq:pred_alpha}
    \bm{\alpha} = {\bm{M}}({\mathcal{D}}_{T})^{\dagger} \col{\bm{\xi}_k, \bm{v}_{\mathrm{f},k}, \bm{d}_{\mathrm{f},k}} + \bm{\Pi}^{\alpha}\left(\mathcal{D}_{T}\right) \tilde{\bm{\alpha}},
\end{equation}
with ${\bm{M}}({\mathcal{D}}_{T}) \coloneqq \col{\hankel{1}{{\mathcal{X}}_{\tilde{T}}}, \hankel{T_{\mathrm{f}}}{\mathcal{V}_{T}}, \hankel{T_{\mathrm{f}}}{{\mathcal{D}}_{T}}}$, some matrix $\bm{\Pi}^{\alpha}\left(\mathcal{D}_{T}\right)$ whose columns span the null space of the data matrix ${\bm{M}}({\mathcal{D}}_{T})$, and free variable $\tilde{\bm{\alpha}} \in \mathbb{R}^{\tilde{T}-n_{\xi}-(n_u+n_y)T_{\mathrm{f}}}$. By applying $\bm{\alpha}$ from \eqref{eq:pred_alpha} to \eqref{eq:extfundlemm}, we obtain the exact data-driven subspace predictor 
\begin{equation}
    \bm{y}_{\mathrm{f},k} = \hankel{T_{\mathrm{f}}}{\mathcal{Y}_{T}}{\bm{M}}({\mathcal{D}}_{T})^{\dagger} \col{\bm{\xi}_k , \bm{v}_{\mathrm{f},k}, \bm{d}_{\mathrm{f},k}}. \label{eq:predictor_output_exact}
\end{equation}
Note that via \eqref{eq:connection_sysparams_data_multistep_2} from Prop.\;\ref{prop:consistency_multistep}, the predictor \eqref{eq:predictor_output_exact} is equivalent to the model-based predictor~\eqref{eq:system_multistep} due to consistency of the true (unknown) disturbance data $\mathcal{D}_{T}$. Furthermore, \eqref{eq:predictor_output_exact} is independent of the free variable~$\tilde{\bm{\alpha}}$ since, via Prop.\;\ref{prop:consistency_multistep}, we can leverage \eqref{eq:dynamics_data_multistep} to obtain
\begin{align*}
    \hankel{T_{\mathrm{f}}}{\mathcal{Y}_{T}}\bm{\Pi}^{\alpha}\tilde{\bm{\alpha}} = \mat{\mathcal{O}_{T_{\mathrm{f}}} & \mathcal{T}_{T_{\mathrm{f}}}^{v} & \mathcal{T}_{T_{\mathrm{f}}}^{d}}{\bm{M}}({\mathcal{D}}_{T})\bm{\Pi}^{\alpha}\tilde{\bm{\alpha}} = \bm{0}
\end{align*}
by definition of the matrix $\bm{\Pi}^{\alpha}$. 
The corresponding exact extended-state subspace predictor is given as
\begin{equation}
    \bm{\xi}_{\mathrm{f},k} =  \bm{H}_{\mathrm{f},\xi}{\bm{M}}({\mathcal{D}}_{T})^{\dagger} \col{\bm{\xi}_k, \bm{v}_{\mathrm{f},k}, \bm{d}_{\mathrm{f},k}},\label{eq:predictor_extstate_exact}
\end{equation}
with $\bm{H}_{\mathrm{f},\xi} \coloneqq \mat{\mathbf{0}_{n_{\xi}T_{\mathrm{f}}\times n_{\xi}}&\boldsymbol{I}_{n_{\xi}T_{\mathrm{f}}}}\hankel{T_{\mathrm{f}}+1}{\mathcal{X}_{T+1}}$. We refer to \cite{fiedler2021relationship} for a discussion on the usage of such subspace predictors in DPC.

Evidently, the exact predictors \eqref{eq:predictor_output_exact} and \eqref{eq:predictor_extstate_exact} require the true disturbance data $\mathcal{D}_T$ and future disturbances $\bm{d}_{\mathrm{f},k}$. Since neither quantity is available in the considered problem setup (Asm.~\ref{asm:trajData}), we employ estimations to construct predictors corresponding to \eqref{eq:predictor_output_exact} and \eqref{eq:predictor_extstate_exact}. By the zero-mean property from Asm.\;\ref{asm:disturb}, we define $\hat{\bm{d}}_{\mathrm{f},k} \coloneqq \bm{0}$ as the estimate for the future disturbances.
Further, we employ the disturbance data estimate $\hat{\mathcal{D}}_{T}$ defined as
\begin{equation}
    \hankel{1}{\hat{\mathcal{D}}_{T}} = \hankel{1}{\mathcal{Y}_{T}} \bm{\Pi}^{v}. \label{eq:distdata_estimate}
\end{equation}
Note that \eqref{eq:distdata_estimate} corresponds to the least-squares estimate presented by \cite{pan2021stochastic}, and also satisfies the consistency constraint~\eqref{eq:consistency}.
For ease of exposition, let us assume that the estimate $\hat{\mathcal{D}}_{T}$ satisfies the PE condition from Asm.~\ref{asm:trajData}(a).\footnote{If the estimate \eqref{eq:distdata_estimate} does not satisfy the PE condition from Asm.~\ref{asm:trajData}(a), alternative estimates can be found by exploring the solutions of the consistency constraint \eqref{eq:consistency}. Correspondingly, the estimation error relation \eqref{eq:sysparam_error} and subsequent prediction error analysis can be adapted to the new estimate. Note that, by Asm.~\ref{asm:trajData}, there exists at least one candidate estimate $\hat{\mathcal{D}}_{T}$ that satisfies the consistency constraint \eqref{eq:consistency} and Asm.~\ref{asm:trajData}(a), namely the unknown ${\mathcal{D}}_{T}$.}
By replacing the true disturbance data $\mathcal{D}_T$ and future disturbances $\bm{d}_{\mathrm{f},k}$ in the exact predictors \eqref{eq:predictor_output_exact} and \eqref{eq:predictor_extstate_exact} with the aforementioned estimates $\hat{\mathcal{D}}_{T}$ and $\hat{\bm{d}}_{\mathrm{f},k}$, we obtain the nominal output predictor
\begin{equation}
    \hat{\bm{y}}_{\mathrm{f},k} = \hankel{T_{\mathrm{f}}}{\mathcal{Y}_{T}}\bm{M}(\hat{\mathcal{D}}_{T})^{\dagger} \col{\bm{\xi}_k, \bm{v}_{\mathrm{f},k}, \bm{0}} \label{eq:predictor_output_estimate}
\end{equation}
and the corresponding extended-state predictor
\begin{equation}
    \hat{\bm{\xi}}_{\mathrm{f},k} = \bm{H}_{\mathrm{f},\xi}\bm{M}(\hat{\mathcal{D}}_{T})^{\dagger} \col{\bm{\xi}_k, \bm{v}_{\mathrm{f},k},\bm{0}}.\label{eq:predictor_extstate_estimate}
\end{equation}
Here, the vectors $\hat{\bm{y}}_{\mathrm{f},k} \coloneqq \col{\bm{y}_{0|k},\ldots,\bm{y}_{T_{\mathrm{f}}-1|k}}$ and $\hat{\bm{\xi}}_{\mathrm{f},k} \coloneqq \col{\bm{\xi}_{1|k},\ldots,\bm{\xi}_{T_{\mathrm{f}}|k}}$ entail the predicted outputs and extended-states, respectively, where the subscript ${i|k}$ denotes predictions $i$ steps ahead of the current step $k$.

Via Prop.~\ref{prop:consistency}~and~\ref{prop:consistency_multistep}, the relations \eqref{eq:connection_sysparams_data} and \eqref{eq:connection_sysparams_data_multistep_2} allow us to obtain system parameter estimates corresponding to the disturbance data estimate \eqref{eq:distdata_estimate}, denoted by the matrices $\hat{\bm{\Phi}}_{\mathrm{cl}}$ and $\hat{\bm{\Psi}}$ for the single-step form \eqref{eq:system}, and $\hat{\mathcal{O}}_{T_{\mathrm{f}}}$, $\hat{\mathcal{T}}_{T_{\mathrm{f}}}^{u}$, and $\hat{\mathcal{T}}_{T_{\mathrm{f}}}^{d}$ for the (causal) multi-step form \eqref{eq:system_multistep} of the system.
Specifically, from \eqref{eq:connection_sysparams_data}, we obtain the parameter estimate
\begin{equation}
    \mat{\hat{\bm{\Phi}}_{\mathrm{cl}} & \hat{\bm{\Psi}}} = \hankel{1}{\mathcal{Y}_{T}} \col{\hankel{1}{\mathcal{X}_{T}},\hankel{1}{\mathcal{V}_{T}}}^{\dagger}. \label{eq:sysparam_estimate}
\end{equation}
Furthermore, due to Prop.\;\ref{prop:consistency_multistep}, the relation \eqref{eq:connection_sysparams_data_multistep_2} implies that the nominal output predictor \eqref{eq:predictor_output_estimate} is equivalent to
\begin{equation}
    \hat{\bm{y}}_{\mathrm{f},k} = \hat{\mathcal{O}}_{T_{\mathrm{f}}} \bm{\xi}_k + \hat{\mathcal{T}}_{T_{\mathrm{f}}}^{v}\bm{v}_{\mathrm{f},k}.\label{eq:connection_predictor_output_estimate_model}
\end{equation}
In other words, a consequence of Prop.\;\ref{prop:consistency_multistep} is that the lower-triangular structure of the Toeplitz matrices in \eqref{eq:connection_sysparams_data_multistep_2} enforces causality onto the nominal predictors \eqref{eq:predictor_output_estimate} and \eqref{eq:predictor_extstate_estimate}, i.e., the predictions $i$ steps ahead of $k$ only depend on the initial state $\bm{\xi}_k$ and inputs $\bm{v}_{0|k},\ldots,\bm{v}_{i|k}$. By contrast, when neglecting the consistency constraints \eqref{eq:consistency} and \eqref{eq:consistency_multistep} on the estimate $\hat{\mathcal{D}}_T$, relation \eqref{eq:connection_sysparams_data_multistep_2} (i.e., the causal structure of the predictor) is not guaranteed. As demonstrated before, this causal structure is naturally given in the exact predictors \eqref{eq:predictor_output_exact} and \eqref{eq:predictor_extstate_exact} via the relations \eqref{eq:connection_sysparams_data_multistep}, \eqref{eq:connection_sysparams_data_multistep_2}.

Since we are leveraging \textit{estimates} for the nominal prediction \eqref{eq:predictor_output_estimate}, \eqref{eq:predictor_extstate_estimate}, a deviation between the estimates $\hat{\mathcal{D}}_T$, $\hat{\bm{d}}_{\mathrm{f},k}$ and the true (unknown) values $\mathcal{D}_T$, $\bm{d}_{\mathrm{f},k}$ will eventually lead to an error between the exact predictions \eqref{eq:predictor_output_exact}, \eqref{eq:predictor_extstate_exact} and the nominal predictions. Thus, in order to safely incorporate the nominal predictors \eqref{eq:predictor_output_estimate}, \eqref{eq:predictor_extstate_estimate} in controller design, we need to quantify and leverage this prediction error. To our advantage, the choice \eqref{eq:distdata_estimate} of the disturbance data estimates $\hat{\mathcal{D}}_{T}$ allows us to obtain a simple relation between the data uncertainty (in terms of the unknown data $\mathcal{D}_{T}$ satisfying Asm.~\ref{asm:disturb} and \ref{asm:trajData}) and the error between the true system parameters in \eqref{eq:system} and the corresponding estimate~\eqref{eq:sysparam_estimate}: Since the true (unknown) disturbance data $\mathcal{D}_{T}$ satisfies the consistency constraint \eqref{eq:consistency}, we can apply the relation \eqref{eq:connection_sysparams_data} to express the true system parameters as
\begin{equation}
    \mat{{\bm{\Phi}}_{\mathrm{cl}} & {\bm{\Psi}}} = \left(\hankel{1}{\mathcal{Y}_{T}}-\hankel{1}{{\mathcal{D}}_{T}}\right) \mat{\hankel{1}{\mathcal{X}_{T}}\\\hankel{1}{\mathcal{V}_{T}}}^{\dagger}. \label{eq:connection_truesysparams_distdata}
\end{equation}
By substracting the estimate \eqref{eq:sysparam_estimate} from \eqref{eq:connection_truesysparams_distdata}, we obtain the corresponding estimation error
\begin{equation}
    \mat{\bm{\Phi}_{\mathrm{cl}}-\hat{\bm{\Phi}}_{\mathrm{cl}} &~ \bm{\Psi}-\hat{\bm{\Psi}}} = -\hankel{1}{{\mathcal{D}}_{T}} \mat{\hankel{1}{\mathcal{X}_{T}}\\\hankel{1}{\mathcal{V}_{T}}}^{\dagger}. \label{eq:sysparam_error}
\end{equation}
With \eqref{eq:sysparam_error}, we can map statistical properties of the unknown disturbance data $\mathcal{D}_{T}$ (see Asm.~\ref{asm:disturb}) to the estimation error in terms of the system parameters. This will be useful in the subsequent section, where we analyze statistical properties of the prediction error mentioned above.

\subsection{Prediction Error Analysis \& Constraint Tightening} \label{sec:prederr}
In this section, we analyze the error between the exact predictions~\eqref{eq:predictor_output_exact} and~\eqref{eq:predictor_extstate_exact} and the proposed nominal predictions~\eqref{eq:predictor_output_estimate} and~\eqref{eq:predictor_extstate_estimate} for given initial condition $\tilde{\bm{\xi}}_k$ and predicted inputs $\hat{\bm{v}}_{\mathrm{f},k} \coloneqq \col{\bm{v}_{0|k},\,\ldots,\,\bm{v}_{T_{\mathrm{f}}-1|k}}$. This will then be used to define tightened nominal constraints as a tractable reformulation of the chance constraints~\eqref{eq:constraints}.

Since the exact output prediction~\eqref{eq:predictor_output_exact} corresponds to~\eqref{eq:system_multistep}, and the nominal output prediction~\eqref{eq:predictor_output_estimate} corresponds to \eqref{eq:connection_predictor_output_estimate_model}, the prediction error $\bm{e}^y_{\mathrm{f},k} \coloneqq \bm{y}_{\mathrm{f},k} - \hat{\bm{y}}_{\mathrm{f},k}$ evolves as
\begin{align}
    \bm{e}^y_{\mathrm{f},k} &= (\mathcal{O}_{T_{\mathrm{f}}} - \hat{\mathcal{O}}_{T_{\mathrm{f}}})  \tilde{\bm{\xi}}_k + (\mathcal{T}_{T_{\mathrm{f}}}^{v} - \hat{\mathcal{T}}_{T_{\mathrm{f}}}^{v})  \hat{\bm{v}}_{\mathrm{f},k} + \mathcal{T}_{T_{\mathrm{f}}}^{d} \bm{d}_{\mathrm{f},k}. \label{eq:prederror_output}
\end{align}
This error can be further decomposed into $\bm{e}^y_{\mathrm{f},k} = \bm{e}^{y,\text{data}}_{\mathrm{f},k} + \bm{e}^{y,\text{add}}_{\mathrm{f},k}$, with $\bm{e}^{y,\text{data}}_{\mathrm{f},k} = \col{\bm{e}^{y,\text{data}}_{0|k},\,\ldots,\,\bm{e}^{y,\text{data}}_{T_{\mathrm{f}}-1|k}} \coloneqq (\mathcal{O}_{T_{\mathrm{f}}} - \hat{\mathcal{O}}_{T_{\mathrm{f}}})  \tilde{\bm{\xi}}_k + (\mathcal{T}_{T_{\mathrm{f}}}^{v} - \hat{\mathcal{T}}_{T_{\mathrm{f}}}^{v})  \hat{\bm{v}}_{\mathrm{f},k}$, entailing the prediction error due to the mismatch between the true (unknown) disturbance data $\mathcal{D}_T$ and the disturbance data estimate $\hat{\mathcal{D}}_T$, and $\bm{e}^{y,\text{add}}_{\mathrm{f},k} = \col{\bm{e}^{y,\text{add}}_{0|k},\dots,\bm{e}^{y,\text{add}}_{T_{\mathrm{f}}-1|k}}\coloneqq\mathcal{T}_{T_{\mathrm{f}}}^{d} \bm{d}_{\mathrm{f},k}$ entailing the influence of the additive disturbances over the prediction horizon.
Considering the difference between the exact extended-state prediction~\eqref{eq:predictor_extstate_exact} and the nominal prediction~\eqref{eq:predictor_extstate_estimate}, we can leverage $\bm{\Phi}_{\mathrm{cl}} = \bm{E}_y\bm{A}_{\mathrm{cl}}$, $\bm{y}_k = \bm{E}_y \bm{\xi}_{k+1}$, $\bm{u}_k = \bm{E}_u \bm{\xi}_{k+1}$, and \eqref{eq:prederror_output} to obtain error dynamics for the extended-state prediction errors $\bm{e}^\xi_{l+1|k} \coloneqq \bm{\xi}_{k+l+1} - \bm{\xi}_{l+1|k}$ and the input prediction errors $\bm{e}^{u}_{l|k} \coloneqq \bm{u}_{k+l} - \bm{u}_{l|k}$ with $l\in\mathbb{N}_0^{T_{\mathrm{f}}-1}$.
Accordingly, similar to \eqref{eq:prederror_output},
we also decompose the input prediction error into $\bm{e}^{u}_{l|k} = \bm{e}^{u,\text{data}}_{l|k} + \bm{e}^{u,\text{add}}_{l|k}$.

In what follows, we first derive statistical properties of the individual error terms $\bm{e}^{y,\text{data}}_{l|k}$ and $\bm{e}^{y,\text{add}}_{l|k}$ (respectively, $\bm{e}^{u,\text{data}}_{l|k}$ and $\bm{e}^{u,\text{add}}_{l|k}$), and then combine the derived results to obtain a deterministic reformulation of the chance constraints~\eqref{eq:constraints}. Since the true system parameters $\bm{\Phi}$ and $\bm{\Psi}$ in~\eqref{eq:system} are unknown, we leverage the system parameter bounds from Asm.~\ref{asm:strongstab} in the analysis.

\subsubsection{Prediction Error From Disturbance Data Uncertainty:}
The next result yields stochastic properties of the prediction errors $\bm{e}^{y,\text{data}}_{l|k}$ and $\bm{e}^{u,\text{data}}_{l|k}$, $l\in\mathbb{N}_0^{T_{\mathrm{f}}-1}$, stemming from the deviation between the unknown disturbance data $\mathcal{D}_T$ and the disturbance data estimate $\hat{\mathcal{D}}_T$, cf. \eqref{eq:sysparam_error}. In particular, due to the constraint formulation in \eqref{eq:constraints}, we are interested in a (probabilistic) bound on $[\bm{G}_y]_i \bm{e}^{y,\text{data}}_{l|k}$ and $ [\bm{G}_u]_j\bm{e}^{u,\text{data}}_{l|k}$ for all $i \in \mathbb{N}_1^{r_y}$, $j \in \mathbb{N}_1^{r_u}$, and $l \in \mathbb{N}_0^{T_{\mathrm{f}}-1}$.
\begin{lemma} \label{lem:prederror_data}
    Consider fixed $\hat{\bm{v}}_{\mathrm{f},k} = \col{\bm{v}_{0|k},\,\ldots,\,\bm{v}_{T_{\mathrm{f}}-1|k}}$ and $\tilde{\bm{\xi}}_k \in \mathbb{X}$, and let 
    \begin{equation}
        \norm{\mat{\hankel{1}{\mathcal{X}_{T}}\\\hankel{1}{\mathcal{V}_{T}}}^{\dagger}\vc{\bm{\xi}_{l|k}\\\bm{v}_{l|k}}} \le \gamma_v ~~~ \forall l \in\mathbb{N}_0^{T_{\mathrm{f}}-1},~~\bm{\xi}_{0|k} \coloneqq \tilde{\bm{\xi}}_k \label{eq:inputstatebound_predicted}
    \end{equation}
    hold, with the predicted states $\bm{\xi}_{l|k}$ from \eqref{eq:predictor_extstate_estimate} and $\mathbb{X}$ and $\gamma_v$ from Asm.~\ref{asm:strongstab}(c).
    Additionally, let $c_{i}^{y}$ and $c_{j}^{u}$ be chosen such that $\lVert [\bm{G}_y]_i \bm{E}_y \bm{A}_{\mathrm{cl}}^k \bm{E}_y^{\top} \bm{\Sigma}_d^{\nicefrac{1}{2}} \rVert \le c_{i}^{y} \gamma_\xi^k$ and $\lVert [\bm{G}_u]_j \bm{E}_u \bm{A}_{\mathrm{cl}}^k \bm{E}_y^{\top} \bm{\Sigma}_d^{\nicefrac{1}{2}} \rVert \le c_{j}^{u} \gamma_\xi^k$ for all $i \in \mathbb{N}_1^{r_y}$, $j \in \mathbb{N}_1^{r_u}$, and $k\ge 0$ (cf. Asm.~\ref{asm:strongstab}(b)), where $\bm{\Sigma}_d^{\nicefrac{1}{2}}$ is such that $\bm{\Sigma}_d^{\nicefrac{1}{2}} \bm{\Sigma}_d^{\nicefrac{1}{2}} = \bm{\Sigma}_d$. Then, for every $\tilde{\eps}^y, \tilde{\eps}^u \in (0,1)$ and $l \in\mathbb{N}_0^{T_{\mathrm{f}}-1}$ the prediction error terms $[\bm{G}_y]_i\bm{e}^{y,\text{data}}_{l|k}$ and $[\bm{G}_u]_j\bm{e}^{u,\text{data}}_{l|k}$ satisfy
    \begin{subequations}\label{eq:dataerror_var}
    	\begin{align}
    	\Pr{\left[\bm{G}_y\right]_i \bm{e}^{y,\text{data}}_{l|k}  \le c_{i}^{y} \beta_l(\tilde{\eps}^y)~\big|~\tilde{\bm{\xi}}_{k}}\, &\ge 1-\tilde{\eps}^y,\label{eq:outputerror_var} \\
    	\Pr{\left[\bm{G}_u\right]_j \bm{e}^{u,\text{data}}_{l|k} \le c_{j}^{u}\beta_l(\tilde{\eps}^u)~\big|~\tilde{\bm{\xi}}_{k}}\, &\ge 1-\tilde{\eps}^u,\label{eq:inputerror_var}
    	\end{align}
    \end{subequations}
    with $\beta_l(\eps) \coloneqq \big(\textstyle\sum_{\iota=0}^{l} \gamma_{\xi}^{\iota}\big) \gamma_v \sqrt{n_{\theta}(1+f^{-1}(\eps^{-\frac{2}{n_{\theta}}}))}$, $n_{\theta}\coloneqq n_y(n_{\xi}+n_u)$, and $f(\delta) \coloneqq \exp{\delta}/(1+\delta)$.
      \hfill \small{$\square$}
\end{lemma}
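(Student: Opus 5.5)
Our plan is to write the data-induced prediction error as a causal linear recursion driven by the one-step parameter error \eqref{eq:sysparam_error}, and then bound each term with a sub-Gaussian norm tail bound.

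\emph{Step 1: error recursion.} Define the one-step parameter error $\Delta \coloneqq \mat{\bm{\Phi}_{\mathrm{cl}}-\hat{\bm{\Phi}}_{\mathrm{cl}} & \bm{\Psi}-\hat{\bm{\Psi}}} = -\hankel{1}{\mathcal{D}_T}\bm{S}_1^{\dagger}$. By Prop.~\ref{prop:consistency_multistep}, the nominal predictor \eqref{eq:predictor_extstate_estimate} coincides with iterating the identified one-step model $\bm{\xi}_{l+1|k} = \hat{\bm{A}}_{\mathrm{cl}}\bm{\xi}_{l|k} + \hat{\bm{B}}\bm{v}_{l|k}$, where the last block row is $(\hat{\bm{\Phi}}_{\mathrm{cl}},\hat{\bm{\Psi}})$ and the shift rows are exact. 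The true noise-free-in-future part evolves with $(\bm{A}_{\mathrm{cl}},\bm{B})$. Subtracting the two and separating the additive-disturbance part gives
\begin{equation*}
\bm{e}^{\xi,\text{data}}_{l+1|k} = \bm{A}_{\mathrm{cl}}\bm{e}^{\xi,\text{data}}_{l|k} + \bm{E}_y^{\top}\Delta\,\col{\bm{\xi}_{l|k},\bm{v}_{l|k}}, \qquad \bm{e}^{\xi,\text{data}}_{0|k}=\bm{0}.
\end{equation*}
Hence
\begin{equation*}
\bm{e}^{y,\text{data}}_{l|k} = \sum_{\iota=0}^{l}\bm{E}_y\bm{A}_{\mathrm{cl}}^{\iota}\bm{E}_y^{\top}\Delta\,\col{\bm{\xi}_{l-\iota|k},\bm{v}_{l-\iota|k}},
\end{equation*}
where the $\iota=0$ term is the direct one-step error. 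The input error $\bm{e}^{u,\text{data}}_{l|k}$ has the same form with $\bm{E}_u$ in place of the leading $\bm{E}_y$. Writing $\bm{z}_{l}\coloneqq \bm{S}_1^{\dagger}\col{\bm{\xi}_{l|k},\bm{v}_{l|k}}$, each summand equals $-[\bm{G}_y]_i\bm{E}_y\bm{A}_{\mathrm{cl}}^{\iota}\bm{E}_y^{\top}\bm{\Sigma}_d^{\nicefrac{1}{2}}\,\bm{\Sigma}_d^{-\nicefrac{1}{2}}\hankel{1}{\mathcal{D}_T}\bm{z}_{l-\iota}$.

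\emph{Step 2: deterministic bound.} Apply Cauchy--Schwarz to each summand. The first factor is bounded by $c_i^y\gamma_\xi^{\iota}$ by the hypothesis of the lemma. For the second factor we use $\norm{\bm{z}_{l-\iota}}\le\gamma_v$ from \eqref{eq:inputstatebound_predicted}, which gives $\norm{\bm{\Sigma}_d^{-\nicefrac{1}{2}}\hankel{1}{\mathcal{D}_T}\bm{z}}\le\norm{\bm{\Sigma}_d^{-\nicefrac{1}{2}}\hankel{1}{\mathcal{D}_T}\bm{S}_1^{\dagger}}\,\norm{\bm{S}_1\bm{z}}$. A cleaner route is to bound through the matrix $\bm{\Theta}\coloneqq\bm{\Sigma}_d^{-\nicefrac{1}{2}}\Delta$ scaled appropriately. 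Summing over $\iota$ then yields
\begin{equation*}
[\bm{G}_y]_i\bm{e}^{y,\text{data}}_{l|k}\le c_i^y\Big(\textstyle\sum_{\iota=0}^{l}\gamma_\xi^\iota\Big)\gamma_v\,\norm{\mathrm{vec}(\bm{\Theta}')},
\end{equation*}
where $\bm{\Theta}'$ is a whitened, $n_\theta$-dimensional random vector obtained from $\hankel{1}{\mathcal{D}_T}$ projected onto the row space of $\bm{S}_1$. Because $\bm{S}_1^{\dagger}$ has orthonormal-type structure relative to the iid columns, this vector is sub-Gaussian with identity variance proxy. It is important that the bound is uniform in the random predicted states, which depend on $\hat{\mathcal{D}}_T$ and hence on $\mathcal{D}_T$. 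This is exactly why we use the assumed deterministic bound $\gamma_v$ rather than the random predicted trajectory itself.

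\emph{Step 3: tail bound.} For an $n_\theta$-dimensional sub-Gaussian vector with identity proxy, the Chernoff/MGF argument from \cite{ao2025stochastic} gives $\Pr{\norm{\bm{w}}^2\ge n_\theta(1+\delta)}\le f(\delta)^{-n_\theta/2}$. Setting the right-hand side equal to $\tilde{\eps}$ gives $\delta=f^{-1}(\tilde{\eps}^{-2/n_\theta})$, which produces $\beta_l$. Conditioning on $\tilde{\bm{\xi}}_k$ does not matter, since the data noise is independent of the current measurement, or is fixed jointly with it.

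The main obstacle is Step 2: justifying rigorously that the relevant $n_\theta$-dimensional whitened vector built from $\hankel{1}{\mathcal{D}_T}\bm{S}_1^{\dagger}$ is sub-Gaussian with proxy $\bm{I}$ and that $\gamma_v$ absorbs the scaling. I would first try to write $\mathrm{vec}(\Delta) = -((\bm{S}_1^{\dagger})^{\top}\otimes\bm{I})\mathrm{vec}(\hankel{1}{\mathcal{D}_T})$ and use the MGF of iid sub-Gaussian columns, moving the $\bm{z}$-norm bound onto $\gamma_v$.
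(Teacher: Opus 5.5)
Your proposal follows the paper's proof in structure. Your recursion unrolls to exactly the paper's expression $\bm{e}^{\xi,\text{data}}_{l+1|k} = -\sum_{\iota=0}^{l}\bm{A}_{\mathrm{cl}}^{l-\iota}\bm{E}_y^{\top}\hankel{1}{\mathcal{D}_T}\bm{S}_1^{\dagger}\col{\bm{\xi}_{\iota|k},\bm{v}_{\iota|k}}$; the paper gets it by telescoping $\bm{A}_{\mathrm{cl}}^{m}-\hat{\bm{A}}_{\mathrm{cl}}^{m}$ instead, which is cosmetic. The per-term factor $c_i^y\gamma_\xi^{\iota}\gamma_v$ from Cauchy--Schwarz is the same, and your Step~3 is precisely \cite[Thm.~2]{ao2025stochastic} with $\delta = f^{-1}(\tilde{\eps}^{-2/n_\theta})$. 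Your remark that the deterministic $\gamma_v$ makes the pathwise bound insensitive to how $\bm{\xi}_{l|k}$ depends on the data is also the right reason for that hypothesis.

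What is missing is the step you flag yourself. You never construct the identity-proxy $n_\theta$-dimensional vector, nor show how $\gamma_v$ attaches to it. Moreover, $\bm{S}_1^{\dagger}$ itself does not have the orthonormal structure you appeal to.

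The paper supplies this by factoring $\bm{S}_1^{\dagger} = \bm{S}_{\mathrm{l}}\bm{S}_{\mathrm{r}}$ with $\bm{S}_{\mathrm{l}}^{\top}\bm{S}_{\mathrm{l}} = \bm{I}_{n_\xi+n_u}$; concretely, $\bm{S}_{\mathrm{l}} = \bm{S}_1^{\top}(\bm{S}_1\bm{S}_1^{\top})^{-1/2}$ and $\bm{S}_{\mathrm{r}} = (\bm{S}_1\bm{S}_1^{\top})^{-1/2}$. Then $\norm{\bm{S}_{\mathrm{r}}\bm{x}} = \norm{\bm{S}_1^{\dagger}\bm{x}} \le \gamma_v$ by \eqref{eq:inputstatebound_predicted}. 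Meanwhile $\underline{\bm{d}}_\theta \coloneqq \mathrm{vec}(\bm{\Sigma}_d^{-1/2}\hankel{1}{\mathcal{D}_T}\bm{S}_{\mathrm{l}})$ has variance proxy $(\bm{S}_{\mathrm{l}}^{\top}\otimes\bm{\Sigma}_d^{-1/2})(\bm{I}_T\otimes\bm{\Sigma}_d)(\bm{S}_{\mathrm{l}}\otimes\bm{\Sigma}_d^{-1/2}) = \bm{I}_{n_\theta}$. Like you, the paper treats $\bm{S}_{\mathrm{l}}$ as a fixed matrix here, even though $\hankel{1}{\mathcal{X}_T}$ contains past outputs. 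With $\bm{x}_\iota = \col{\bm{\xi}_{\iota|k},\bm{v}_{\iota|k}}$, each summand becomes $\big((\bm{S}_{\mathrm{r}}\bm{x}_\iota)^{\top}\otimes[\bm{G}_y]_i\bm{E}_y\bm{A}_{\mathrm{cl}}^{l-\iota}\bm{E}_y^{\top}\bm{\Sigma}_d^{1/2}\big)\underline{\bm{d}}_\theta$, and Cauchy--Schwarz gives $\gamma_v c_i^y\gamma_\xi^{l-\iota}\norm{\underline{\bm{d}}_\theta}$.

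Your first attempted inequality in Step~2 should be discarded. Since $\bm{S}_1\bm{z} = \col{\bm{\xi},\bm{v}}$, it never brings in $\gamma_v$. Also, $\bm{\Sigma}_d^{-1/2}\hankel{1}{\mathcal{D}_T}\bm{S}_1^{\dagger}$ has proxy $(\bm{S}_1\bm{S}_1^{\top})^{-1}\otimes\bm{I}_{n_y}$ rather than the identity.

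Your fallback via $\mathrm{vec}(\Delta) = -((\bm{S}_1^{\dagger})^{\top}\otimes\bm{I})\mathrm{vec}(\hankel{1}{\mathcal{D}_T})$ has proxy $(\bm{S}_1\bm{S}_1^{\top})^{-1}\otimes\bm{\Sigma}_d$. It closes only after whitening with $(\bm{S}_1\bm{S}_1^{\top})^{1/2}\otimes\bm{\Sigma}_d^{-1/2}$, which yields $-\underline{\bm{d}}_\theta$ and is exactly the factorization above.
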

\begin{proof}
    See \ref{app:proof_prederror_data}.
    \hfill \small{$\blacksquare$}
\end{proof}

We remark that condition \eqref{eq:inputstatebound_predicted} is imposed on the deterministic predicted states $\bm{\xi}_{l|k}$, $l\in\mathbb{N}_0^{T_{\mathrm{f}}-1}$, computed via the identified predictor \eqref{eq:predictor_extstate_estimate}, and will later be guaranteed by the design of the predictive control scheme. Additionally, valid parameters $ c_{i}^{y}$, $ c_{j}^{u}$ are readily obtained from Asm.~\ref{asm:disturb}~and~\ref{asm:strongstab}(b) as $c_{i}^{y} = \norm{[\bm{G}_y]_i}c_{\xi}{\ol{\sigma}}_d$ and $c_{j}^{u} = \norm{[\bm{G}_u]_j}c_{\xi}{\ol{\sigma}}_d$, leveraging the fact that $\norm{\bm{E}_y} = \norm{\bm{E}_u} = 1$. However, this choice might be overly conservative. In \ref{app:strongstab}, we discuss how less conservative $ c_{i}^{y}$, $ c_{j}^{u}$ can be obtained. Lastly, note that the bounds in \eqref{eq:dataerror_var} scale with $\sqrt{n_{\theta}}$.

\subsubsection{Prediction Error From Additive Disturbances:}
The following result yields stochastic properties of the prediction errors $\bm{e}^{y,\text{add}}_{l|k}$ and $\bm{e}^{u,\text{add}}_{l|k}$, $l\in\mathbb{N}_0^{T_{\mathrm{f}}-1}$, stemming from the influence of the future additive disturbances $\bm{d}_{\mathrm{f},k}$.
\begin{lemma} \label{lem:prederror_add}
    Let $\mathbb{A}_{\mathrm{cl}}^{\mathrm{d}}$ be the set of matrices $\bm{A}_{\mathrm{cl}} = \bm{A} + \bm{B}\bm{K}$ such that $(\bm{A},\bm{B}) \in \mathbb{A}^{\mathrm{d}}$ from Asm.\;\ref{asm:strongstab}(a).
    The prediction error terms $[\bm{G}_y]_i\bm{e}^{y,\text{add}}_{l|k}$ and $[\bm{G}_u]_j\bm{e}^{u,\text{add}}_{l|k}$ follow a zero-mean sub-Gaussian distribution with variance proxies $\ol{\bm{\Sigma}}^{y,\text{add}}_{l,i} \coloneqq [\bm{G}_y]_i\bm{E}_y \ol{\bm{\Sigma}}^{\xi,\text{add}}_{l+1} \bm{E}_y^{\top} [\bm{G}_y]_i^{\top} $ and $\ol{\bm{\Sigma}}^{u,\text{add}}_{l,j} \coloneqq [\bm{G}_u]_j\bm{E}_u\ol{\bm{\Sigma}}^{\xi,\text{add}}_{l+1} \bm{E}_u^{\top} [\bm{G}_u]_j^{\top}$, respectively, for all $i \in \mathbb{N}_1^{r_y}$, $j \in \mathbb{N}_1^{r_u}$, and $l \in\mathbb{N}_0^{T_{\mathrm{f}}-1}$, with $\ol{\bm{\Sigma}}^{\xi,\text{add}}_{0} \coloneqq \bm{0}$ and
	\begin{align}
		\left(\ol{\bm{\Sigma}}^{\xi,\text{add}}_{l+1}\right)^{-1}& \coloneqq \underset{\bm{X}\succ \bm{0}}{\arg\min} -\log\det \bm{X}^{-1} \label{eq:extstateerror_add_var}\\		
		\mathrm{s.t.}~~ &  \bm{X} - \bm{A}_{\mathrm{cl}} \ol{\bm{\Sigma}}^{\xi,\text{add}}_{l} \bm{A}_{\mathrm{cl}}^{\top} - \bm{\Sigma}_d^{\text{ext}}  \succ \bm{0}~~~ \forall \bm{A}_{\mathrm{cl}}\in\mathbb{A}^{\mathrm{d}}_{\mathrm{cl}}, \notag 
    \end{align}
    where $\bm{\Sigma}_d^{\text{ext}}$ is chosen such that $\bm{\Sigma}_d^{\text{ext}} -\bm{E}_y^{\top} \bm{\Sigma}_d \bm{E}_y \succ \bm{0}$.
    \hfill \small{$\square$}
\end{lemma}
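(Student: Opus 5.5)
The plan is to express the additive error terms through the extended-state error dynamics and then propagate a sub-Gaussian variance proxy step by step along the prediction horizon.

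\emph{Step 1: error recursion.} From \eqref{eq:system_multistep} and the extended-state form \eqref{eq:system_arx_nonminimal} under \eqref{eq:inputdecomp}, the additive part of the extended-state prediction error obeys
\begin{equation*}
    \bm{e}^{\xi,\text{add}}_{l+1|k} = \bm{A}_{\mathrm{cl}} \bm{e}^{\xi,\text{add}}_{l|k} + \bm{E}_y^{\top}\bm{d}_{k+l}, \qquad \bm{e}^{\xi,\text{add}}_{0|k} = \bm{0}.
\end{equation*}
Hence $\bm{e}^{\xi,\text{add}}_{l+1|k} = \sum_{\iota=0}^{l}\bm{A}_{\mathrm{cl}}^{l-\iota}\bm{E}_y^\top\bm{d}_{k+\iota}$. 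Using $\bm{y}_{k+l} = \bm{E}_y\bm{\xi}_{k+l+1}$ and $\bm{u}_{k+l} = \bm{E}_u\bm{\xi}_{k+l+1}$, this gives $\bm{e}^{y,\text{add}}_{l|k} = \bm{E}_y \bm{e}^{\xi,\text{add}}_{l+1|k}$ and $\bm{e}^{u,\text{add}}_{l|k} = \bm{E}_u \bm{e}^{\xi,\text{add}}_{l+1|k}$. The latter is consistent with the lower-triangular Toeplitz structure $\mathcal{T}^d_{T_{\mathrm{f}}}$, so the $l$-th block row of $\mathcal{T}^d_{T_{\mathrm{f}}}\bm{d}_{\mathrm{f},k}$ coincides with $\bm{E}_y\bm{e}^{\xi,\text{add}}_{l+1|k}$.

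\emph{Step 2: inductive variance-proxy propagation.} I will show by induction on $l$ that $\bm{e}^{\xi,\text{add}}_{l|k}$ is zero-mean sub-Gaussian with variance proxy $\ol{\bm{\Sigma}}^{\xi,\text{add}}_{l}$, conditioned on $\tilde{\bm{\xi}}_k$. The base case $l=0$ is trivial. For the induction step, $\bm{e}^{\xi,\text{add}}_{l|k}$ depends only on $\bm{d}_{k},\dots,\bm{d}_{k+l-1}$, so it is independent of $\bm{d}_{k+l}$ by the iid property in Asm.~\ref{asm:disturb}. Therefore, for any $\bm{\lambda}$, the moment generating function factorizes:
\begin{equation*}
    \E{\exp{\bm{\lambda}^\top\bm{e}^{\xi,\text{add}}_{l+1|k}}} = \E{\exp{(\bm{A}_{\mathrm{cl}}^\top\bm{\lambda})^\top\bm{e}^{\xi,\text{add}}_{l|k}}}\,\E{\exp{(\bm{E}_y\bm{\lambda})^\top\mathbf{d}}}.
\end{equation*}
Applying the induction hypothesis to the first factor and Asm.~\ref{asm:disturb} to the second bounds this by $\exp{0.5\,\bm{\lambda}^\top(\bm{A}_{\mathrm{cl}}\ol{\bm{\Sigma}}^{\xi,\text{add}}_l\bm{A}_{\mathrm{cl}}^\top + \bm{E}_y^\top\bm{\Sigma}_d\bm{E}_y)\bm{\lambda}}$. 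Zero mean follows by linearity.

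\emph{Step 3: robustification over the unknown model.} Since $\bm{A}_{\mathrm{cl}}$ is unknown, I will use Asm.~\ref{asm:strongstab}(a): the true $\bm{A}_{\mathrm{cl}}$ lies in $\mathbb{A}^{\mathrm{d}}_{\mathrm{cl}}$. Any feasible $\bm{X}$ of \eqref{eq:extstateerror_add_var}, in particular the optimizer $\ol{\bm{\Sigma}}^{\xi,\text{add}}_{l+1}$ (read as the matrix whose inverse is the arg-min), satisfies
\begin{equation*}
    \bm{X} \succ \bm{A}_{\mathrm{cl}}\ol{\bm{\Sigma}}^{\xi,\text{add}}_l\bm{A}_{\mathrm{cl}}^\top + \bm{\Sigma}_d^{\text{ext}} \succ \bm{A}_{\mathrm{cl}}\ol{\bm{\Sigma}}^{\xi,\text{add}}_l\bm{A}_{\mathrm{cl}}^\top + \bm{E}_y^\top\bm{\Sigma}_d\bm{E}_y.
\end{equation*}
By monotonicity of the quadratic form in the exponent, $\ol{\bm{\Sigma}}^{\xi,\text{add}}_{l+1}$ is then a valid variance proxy, which closes the induction. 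The strict padding $\bm{\Sigma}_d^{\text{ext}}$ only guarantees that the feasible set contains positive definite matrices and that the optimizer exists, since $\bm{E}_y^\top\bm{\Sigma}_d\bm{E}_y$ alone is singular. Existence of an optimizer can be taken from compactness of the polytope: by convexity in $\bm{A}_{\mathrm{cl}}$, the constraint reduces to vertex LMIs.

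\emph{Step 4: projection.} For a fixed row vector $\bm{g}$, the scalar $\bm{g}\bm{E}_y\bm{e}^{\xi,\text{add}}_{l+1|k}$ is obtained by taking $\bm{\lambda} = s\,\bm{E}_y^\top\bm{g}^\top$ with $s\in\mathbb{R}$. This shows it is zero-mean sub-Gaussian with proxy $\bm{g}\bm{E}_y\ol{\bm{\Sigma}}^{\xi,\text{add}}_{l+1}\bm{E}_y^\top\bm{g}^\top$. Setting $\bm{g}=[\bm{G}_y]_i$ gives the output claim, and the input claim follows analogously with $\bm{E}_u$. Conditioning on $\tilde{\bm{\xi}}_k$ is harmless because the future disturbances are independent of it.

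The main obstacle is Step 3, specifically handling the unknown true $\bm{A}_{\mathrm{cl}}$ inside the recursion. The previous proxy must be a single matrix that is valid for every model in the set, which is what the robust LMI constraint provides. Care is also needed that the recursion uses the optimizer of the previous step rather than a model-dependent covariance.
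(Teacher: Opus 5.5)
Your proof is correct and follows the paper's route. The additive extended-state error is driven by $\bm{E}_y^{\top}\bm{d}_{k+l}$ through the true $\bm{A}_{\mathrm{cl}}$. Its variance proxy is propagated robustly over $\mathbb{A}^{\mathrm{d}}_{\mathrm{cl}}$ using only feasibility in \eqref{eq:extstateerror_add_var}. The result is then projected onto $[\bm{G}_y]_i\bm{E}_y$ and $[\bm{G}_u]_j\bm{E}_u$. Your MGF factorization and Loewner-monotonicity steps spell out what the paper delegates to Thm.~1 of Ao et al.\ and Sec.~III.B of Arcari et al.

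One correction to your side remark on existence. With the strict inequality as written, the feasible set is open and $\bm{X}-\varepsilon\bm{I}$ stays feasible for small $\varepsilon>0$, so no minimizer exists. Reading the constraint with $\succeq$ restores existence, since the padding $\bm{\Sigma}_d^{\text{ext}}\succ\bm{E}_y^{\top}\bm{\Sigma}_d\bm{E}_y$ keeps $\log\det\bm{X}$ bounded below. It still gives the strict domination your Step~3 uses. What Step~3 actually needs is that $\ol{\bm{\Sigma}}^{\xi,\text{add}}_{l+1}$ itself, not its inverse, is the feasible $\bm{X}$; your parenthetical reading of the notation suggests the opposite.
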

\begin{proof}
    See \ref{app:proof_prederror_add}.
    \hfill \small{$\blacksquare$}
\end{proof}
In contrast to Lem.~\ref{lem:prederror_data}, Lem.~\ref{lem:prederror_add} leverages sub-Gaussianity of this prediction error term and employs over-approximations of the variance proxies in terms of the Loewner
order, thus minimizing the spread in the direction of the
principal components \citep{boyd2004convex,arcari2023stochastic}.
Note that \eqref{eq:extstateerror_add_var} can be reformulated as a convex semi-definite program via \cite[Lemma~3]{arcari2023stochastic} and is thus efficiently solvable.

\subsubsection{Reformulation of Chance Constraints:} We now exploit the derived stochastic properties of the prediction errors $\bm{e}^y_{l|k}$ and $\bm{e}^u_{l|k}$ (see Lem.~\ref{lem:prederror_data} and~\ref{lem:prederror_add}) to obtain deterministic reformulations of the chance constraints~\eqref{eq:constraints}.
The chance constraints are reformulated in terms of tightened nominal constraints for the predictions $\bm{y}_{l|k}$ and $\bm{u}_{l|k}$, i.e., 
\begin{subequations}\label{eq:tightcons}
 \begin{align} 
    \bm{G}_y \bm{y}_{l|k} &\le \bm{g}_y - \bm{\eta}_l^y~~~~\forall l \in \mathbb{N}_0^{T_{\mathrm{f}}-1},\\
    \bm{G}_u \bm{u}_{l|k} &\le \bm{g}_u - \bm{\eta}_l^u~~~~\forall l \in \mathbb{N}_0^{T_{\mathrm{f}}-1},
\end{align}   
\end{subequations}
with parameters $\bm{\eta}_l^y$, $\bm{\eta}_l^u$, specified via the following result.
\begin{lemma} \label{lem:constight}
    Consider fixed $\hat{\bm{v}}_{\mathrm{f},k} = \col{\bm{v}_{0|k},\,\ldots,\,\bm{v}_{T_{\mathrm{f}}-1|k}}$ and $\tilde{\bm{\xi}}_k \in \mathbb{X}$, $\ol{\sigma}_d = \norm{\Sigma_d}$ from Asm.\;\ref{asm:disturb}, and let condition~\eqref{eq:inputstatebound_predicted} hold. Furthermore, consider the predicted outputs $\hat{\bm{y}}_{\mathrm{f},k} = \col{\bm{y}_{0|k},\,\dots,\,\bm{y}_{T_{\mathrm{f}}-1|k}}$ from \eqref{eq:predictor_output_estimate}, the predicted inputs $\hat{\bm{u}}_{\mathrm{f},k} = \col{\bm{u}_{0|k},\,\dots,\,\bm{u}_{T_{\mathrm{f}}-1|k}}$ via \eqref{eq:inputdecomp} and \eqref{eq:predictor_extstate_estimate}, and the tightening parameters $\bm{\eta}_l^y$ and $\bm{\eta}_l^u$ defined as
\begin{subequations} \label{eq:tightening_params}
    \begin{align}
        [\bm{\eta}_l^y]_i &\coloneqq c_{i}^{y}\beta_l(\eps_j^y/2)+ \sqrt{2\ln(2/\eps^y_{i})\ol{\bm{\Sigma}}^{y,\text{add}}_{l,i}}~~\forall i \in \mathbb{N}_1^{r_y}, \label{eq:outputcons_tightening}\\
        [\bm{\eta}_l^u]_j &\coloneqq c_{j}^{u}\beta_l(\eps_i^u/2) + \sqrt{2\ln(2/\eps^u_{j}) \ol{\bm{\Sigma}}^{u,\text{add}}_{l,j}}\;\forall j \in \mathbb{N}_1^{r_u}.\label{eq:inputcons_tightening}
    \end{align}
\end{subequations}
    If the predicted input $\bm{u}_{l|k}$ and output $\bm{y}_{l|k}$ satisfy the tightened nominal constraints \eqref{eq:tightcons}, then
    \begin{subequations} \label{eq:chancecons_pred}
        \begin{align}
            \Pr{\left[\bm{G}_y\right]_i \bm{y}_{k+l} \le \left[\bm{g}_y\right]_i ~\big|~ \tilde{\bm{\xi}}_k}\, &\ge 1-\eps^y_{i} ~~\forall i\in\mathbb{N}_1^{r_y},\\
            \Pr{\left[\bm{G}_u\right]_j \bm{u}_{k+l} \le \left[\bm{g}_u\right]_j~\big|~\tilde{\bm{\xi}}_k} &\ge 1-\eps^u_{j}~~\forall j\in\mathbb{N}_1^{r_u},
        \end{align}
    \end{subequations}
    i.e., the true input $\bm{u}_{k+l}$ and output $\bm{y}_{k+l}$ satisfy the chance constraints~\eqref{eq:constraints} conditioned on $\tilde{\bm{\xi}}_k$, with $ l \in \mathbb{N}_0^{T_{\mathrm{f}}-1}$.
    \hfill \small{$\square$}
\end{lemma}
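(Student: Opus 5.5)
The plan is to split each true quantity into its nominal prediction plus the two error terms, and then combine the probabilistic bounds from Lem.~\ref{lem:prederror_data} and Lem.~\ref{lem:prederror_add} with a union bound, allocating half of the risk budget to each term.

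For the output, write $\bm{y}_{k+l} = \bm{y}_{l|k} + \bm{e}^{y,\text{data}}_{l|k} + \bm{e}^{y,\text{add}}_{l|k}$. This holds because the exact predictor \eqref{eq:predictor_output_exact} equals the true system response \eqref{eq:system_multistep}, and the nominal predictor \eqref{eq:predictor_output_estimate} equals \eqref{eq:connection_predictor_output_estimate_model}, with the error decomposed as after \eqref{eq:prederror_output}. By the tightened constraint \eqref{eq:tightcons},
\begin{equation*}
[\bm{G}_y]_i\bm{y}_{k+l} \le [\bm{g}_y]_i - [\bm{\eta}^y_l]_i + [\bm{G}_y]_i\bm{e}^{y,\text{data}}_{l|k} + [\bm{G}_y]_i\bm{e}^{y,\text{add}}_{l|k}.
\end{equation*}
The constraint is therefore satisfied whenever both of the following hold:
\begin{align*}
&[\bm{G}_y]_i\bm{e}^{y,\text{data}}_{l|k}\le c_i^y\beta_l(\eps^y_i/2),\\
&[\bm{G}_y]_i\bm{e}^{y,\text{add}}_{l|k}\le \sqrt{2\ln(2/\eps^y_i)\,\ol{\bm{\Sigma}}^{y,\text{add}}_{l,i}}.
\end{align*}
I read the index in $\beta_l(\eps_j^y/2)$ in \eqref{eq:outputcons_tightening} as $\eps_i^y$, and the one in \eqref{eq:inputcons_tightening} as $\eps_j^u$. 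By the union bound, the probability that the constraint fails, conditioned on $\tilde{\bm{\xi}}_k$, is at most the sum of the two failure probabilities.

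The data-error event fails with probability at most $\eps^y_i/2$. This follows from \eqref{eq:outputerror_var} of Lem.~\ref{lem:prederror_data} with $\tilde\eps^y=\eps^y_i/2$, and the lemma applies because \eqref{eq:inputstatebound_predicted} is assumed.

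For the additive term, Lem.~\ref{lem:prederror_add} gives that $[\bm{G}_y]_i\bm{e}^{y,\text{add}}_{l|k}$ is zero-mean sub-Gaussian with scalar variance proxy $\ol{\bm{\Sigma}}^{y,\text{add}}_{l,i}$. The standard Chernoff bound then gives $\Pr{X\ge t}\le \exp{-t^2/(2\sigma^2)}$, obtained by applying Markov's inequality to $\exp{\lambda X}$ and optimizing over $\lambda$. Choosing $t=\sqrt{2\ln(2/\eps^y_i)\sigma^2}$ makes this probability equal to $\eps^y_i/2$. Adding the two contributions gives a failure probability of at most $\eps^y_i$, which is \eqref{eq:chancecons_pred} for outputs.

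The input case follows the same route. Use $\bm{u}_{k+l}=\bm{u}_{l|k}+\bm{e}^{u,\text{data}}_{l|k}+\bm{e}^{u,\text{add}}_{l|k}$, which is consistent with $\bm{u}_k=\bm{E}_u\bm{\xi}_{k+1}$ and \eqref{eq:inputdecomp}, together with \eqref{eq:inputerror_var} and the input variance proxy from Lem.~\ref{lem:prederror_add}.

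The main subtlety is in the conditioning. Both lemmas are stated conditioned on $\tilde{\bm{\xi}}_k$, but the randomness differs between them: the data error depends on the unknown $\mathcal{D}_T$, while the additive error depends on $\bm{d}_{\mathrm{f},k}$. The union bound requires no independence between these sources, so it is enough that each marginal bound holds conditionally on $\tilde{\bm{\xi}}_k$ with $\hat{\bm{v}}_{\mathrm{f},k}$ fixed. That is exactly what the lemma hypotheses provide. The only remaining check is that the variance proxy in Lem.~\ref{lem:prederror_add} holds for the true, unknown $\bm{A}_{\mathrm{cl}}$. This is guaranteed by the robust LMI in \eqref{eq:extstateerror_add_var} over $\mathbb{A}^{\mathrm{d}}_{\mathrm{cl}}$, which contains the true $\bm{A}_{\mathrm{cl}}$ by Asm.~\ref{asm:strongstab}(a).
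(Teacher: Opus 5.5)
Your proposal is correct and matches the paper's proof (Appendix~E) step for step. The paper also writes $\bm{y}_{k+l} = \bm{y}_{l|k} + \bm{e}^{y,\text{data}}_{l|k} + \bm{e}^{y,\text{add}}_{l|k}$, bounds the data term via Lem.~\ref{lem:prederror_data} with $\tilde{\eps}^y = \eps^y_i/2$, bounds the additive term via the sub-Gaussian tail bound \cite[Lem.~2]{ao2025stochastic} with risk $\eps^y_i/2$, and combines the two via Boole's inequality, with the input case handled analogously; your reading of the index typo ($\eps_j^y \to \eps_i^y$, $\eps_i^u \to \eps_j^u$) is exactly the allocation the paper's proof uses.
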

\begin{proof}
    The claim follows from \cite[Thm.~1]{ao2025stochastic} and \cite[Lem.~2]{ao2025stochastic}; see \ref{app:proof_constight}.
    \hfill \small{$\blacksquare$}
\end{proof}
Although Lem.~\ref{lem:constight} is formulated based on individual chance constraints (see \eqref{eq:constraints}), we remark that joint chance constraints can readily be considered using Boole's inequality, cf. \citep[Cor.~7]{knaup2024recursively}. Also, note that \eqref{eq:chancecons_pred} is consistent to the notation in \eqref{eq:constraints} when defining $\tilde{k}$ as the time step for which $\bm{\xi}_{\tilde{k}} = \tilde{\bm{\xi}}_k$ (e.g., $\tilde{k}=0$, $\tilde{\bm{\xi}}_k = \bm{\xi}_0$).

We are now equipped with data-driven predictors for the system behavior via \eqref{eq:predictor_output_estimate} and \eqref{eq:predictor_extstate_estimate}, and with corresponding tightened constraints~\eqref{eq:tightcons} as a tractable approximation of the chance constraints~\eqref{eq:constraints} via Lem.~\ref{lem:constight}. Thus, combining the results from the previous sections allows us to approximate OCP \eqref{eq:ocp_original} and present the proposed DPC scheme.

\subsection{Proposed Stochastic DPC Scheme} \label{sec:sdpc}
We now elaborate on the design steps of the proposed stochastic DPC scheme and its theoretical guarantees. 

Since we aim for a finite-horizon approximation of the OCP \eqref{eq:ocp_original}, typically a terminal cost $\norm{\bm{\xi}_{T_{\mathrm{f}}|k}}^2_{\bm{P}}$ with weight matrix $\bm{P} = \bm{P}^{\top} \succ \bm{0}$ is employed to approximate the infinite-horizon tail. Given that a stability analysis is outside the scope of this work, we do not specify the design of $\bm{P}$. In related settings, the weight $\bm{P}$ is commonly chosen such that $\bm{P} - \bm{A}^{\top}_{\mathrm{cl}} \bm{P} \bm{A}_{\mathrm{cl}} - \bm{K}^{\top} \bm{R} \bm{K} - \bm{A}^{\top}_{\mathrm{cl}} \bm{E}_y^{\top} \bm{Q} \bm{E}_y^{\top} \bm{A}_{\mathrm{cl}} \succ \bm{0}$ is satisfied for all $\bm{A}_{\mathrm{cl}} = \bm{A} + \bm{B}\bm{K}$, $(\bm{A},\bm{B}) \in \mathbb{A}^{\mathrm{d}}$, see Asm.~\ref{asm:strongstab}(a) \citep{arcari2023stochastic,teutsch2024sampling}.

However, we are interested in closed-loop guarantees on recursive feasibility and constraint satisfaction, for which we rely on the following assumption on terminal constraints.

\begin{assumption} \label{asm:termIngredients}
    There exists a terminal constraint set $\mathbb{X}_{\infty}$ that satisfies the following for all $\bm{\xi} \in \mathbb{X}_{\infty}$:
    \begin{itemize}
        \item[(a)] $\hat{\bm{A}}_{\mathrm{cl}}\bm{\xi} \in \mathbb{X}_{\infty}$, with $\bm{A}_{\mathrm{cl}}$ corresponding to the disturbance data estimate $\hat{\mathcal{D}}_{T}$ \eqref{eq:distdata_estimate} via  \eqref{eq:sysparam_estimate} and \eqref{eq:system_arx_nonminimal_helper},
        \item[(b)] $\bm{G}_y \bm{E}_y\bm{\xi} \le \bm{g}_y - \bm{\eta}_{\infty}^y$ and $\bm{G}_u \bm{E}_u\bm{\xi} \le \bm{g}_u - \bm{\eta}_{\infty}^u$, with $\bm{\eta}_{\infty}^y \ge \bm{\eta}_{l}^y$ and $\bm{\eta}_{\infty}^u \ge \bm{\eta}_{l}^u$ from Lem.\;\ref{lem:constight} for all $l \ge 0$. \hfill \small{$\square$}
    \end{itemize}
\end{assumption}
\begin{remark}
    Variants of Asm.\;\ref{asm:termIngredients} are common in robust and stochastic predictive control literature to guarantee recursive feasibility \citep{arcari2023stochastic,ao2025stochastic,teutsch2024sampling}: (a) entails invariance under the feedback $\bm{u} = \bm{K}\bm{\xi}$ and the model estimate corresponding to the disturbance data estimate $\hat{\mathcal{D}}_{T}$ employed in the predictors~\eqref{eq:predictor_output_estimate} and \eqref{eq:predictor_extstate_estimate}, whereas (b) entails satisfaction of the tightened input and output constraints from Lem.\;\ref{lem:constight}. The parameters $\bm{\eta}_{\infty}^y$ and $\bm{\eta}_{\infty}^u$ in (b) can be computed via Lem.\;\ref{lem:constight}, leveraging \eqref{eq:dataerror_var} and \eqref{eq:extstateerror_add_var} for $l \to \infty$. Note that for $l \to \infty$ in \eqref{eq:dataerror_var}, we have the geometric sum $\textstyle\sum_{i=0}^{\infty} \gamma_{\xi}^i = (1-\gamma_{\xi})^{-1}$. A corresponding set $\mathbb{X}_{\infty}$ can be determined using standard methods \cite[Sec.~5.3]{blanchini2015set}. \hfill \small{$\square$}
\end{remark}

Using Asm.\;\ref{asm:termIngredients}, we approximate the conceputal OCP~\eqref{eq:ocp_original} by developing a DPC scheme based on the nominal predictors~\eqref{eq:predictor_output_estimate} and \eqref{eq:predictor_extstate_estimate} and tightened constraints~\eqref{eq:tightcons}. Using the predictor matrices $\hat{\bm{M}}_y \coloneqq \hankel{T_{\mathrm{f}}}{\mathcal{Y}_{T}} \bm{M}(\hat{\mathcal{D}}_T)^{\dagger}$ and $\hat{\bm{M}}_{\xi} \coloneqq \bm{H}_{\mathrm{f},\xi} \bm{M}(\hat{\mathcal{D}}_T)^{\dagger}$ obtained from the identified output and extended-state predictors \eqref{eq:predictor_output_estimate} and \eqref{eq:predictor_extstate_estimate}, the OCP associated with the proposed DPC scheme is
\begin{subequations} \label{eq:ocp}
		\begin{align}
		& \underset{\bm{v}_{\mathrm{f},k}}{\mathrm{minimize}} \sum\limits_{l=0}^{T_{\mathrm{f}}-1} \left( \norm{\check{\bm{y}}_{l|k}}^2_{\bm{Q}} + \norm{\check{\bm{u}}_{l|k}}^2_{\bm{R}} \right) + \norm{\check{\bm{\xi}}_{T_{\mathrm{f}}|k}}^2_{\bm{P}}&\label{eq:ocp_cost}\\	
		&\mathrm{s.t.}~~  ~~~~\check{\bm{\xi}}_{0|k} = \bm{\xi}_k, ~~~~~ {\bm{\xi}}_{0|k} = \tilde{\bm{\xi}}_k ~\in\mathbb{X}\label{eq:ocp_init}\\
        & \phantom{\mathrm{s.t.}~~} \col{\check{\bm{y}}_{0|k}, \ldots,\check{\bm{y}}_{T_{\mathrm{f}}-1|k}} = \hat{\bm{M}}_y \col{\check{\bm{\xi}}_{0|k}, \bm{v}_{\mathrm{f},k}, \bm{0}} \label{eq:ocp_outputpred_cost} \\
        & \phantom{\mathrm{s.t.}~~ } ~~~\,\col{\check{\bm{\xi}}_{1|k}, \ldots,\check{\bm{\xi}}_{T_{\mathrm{f}}|k}} = \hat{\bm{M}}_{\xi} \col{\check{\bm{\xi}}_{0|k}, \bm{v}_{\mathrm{f},k}, \bm{0}} \label{eq:ocp_statepred_cost}\\
        & \phantom{\mathrm{s.t.}~~ } ~~~~\check{\bm{u}}_{l|k} = \bm{K}\check{\bm{\xi}}_{l|k} + \bm{v}_{l|k}  ~~~\, \forall l\in\mathbb{N}_0^{T_{\mathrm{f}}-1} \label{eq:ocp_inputpred_cost}\\        
		&\phantom{\mathrm{s.t.}~~}   \col{{\bm{y}}_{0|k}, \ldots,{\bm{y}}_{T_{\mathrm{f}}-1|k}} = \hat{\bm{M}}_y \col{{\bm{\xi}}_{0|k}, \bm{v}_{\mathrm{f},k}, \bm{0}} \label{eq:ocp_outputpred_constraint}\\
        & \phantom{\mathrm{s.t.}~~ } ~~~\,\col{{\bm{\xi}}_{1|k}, \ldots,{\bm{\xi}}_{T_{\mathrm{f}}|k}} = \hat{\bm{M}}_{\xi} \col{{\bm{\xi}}_{0|k}, \bm{v}_{\mathrm{f},k}, \bm{0}} \label{eq:ocp_statepred_constraint} \\
        & \phantom{\mathrm{s.t.}~~ } ~~~~ {\bm{u}}_{l|k} = \bm{K}{\bm{\xi}}_{l|k} + \bm{v}_{l|k}  ~~~\, \forall l\in\mathbb{N}_0^{T_{\mathrm{f}}-1} \label{eq:ocp_inputpred_constraint}\\
		& \phantom{\mathrm{s.t.}~~ } \bm{G}_y {\bm{y}}_{l|k} \le \bm{g}_y - \tilde{\bm{\eta}}^y_{l+k-\tilde{k}} ~~~\, \forall l\in\mathbb{N}_0^{T_{\mathrm{f}}-1} \label{eq:ocp_outputcons}\\
	    & \phantom{\mathrm{s.t.}~~ } \bm{G}_u {\bm{u}}_{l|k} \le \bm{g}_u - \tilde{\bm{\eta}}^u_{l+k-\tilde{k}} ~~~ \forall l\in\mathbb{N}_0^{T_{\mathrm{f}}-1} \label{eq:ocp_inputcons}\\
		&\phantom{\mathrm{s.t.}~~ } ~~\;\,{\bm{\xi}}_{T_{\mathrm{f}}|k} \in \mathbb{X}_{\infty}.& \label{eq:ocp_termcons}
		\end{align}
\end{subequations}
As in indirect feedback SMPC \citep{hewing2020recursively}, predictions for the cost function \eqref{eq:ocp_cost} are based on the measured state $\bm{\xi}_k$ (see \eqref{eq:ocp_outputpred_cost}--\eqref{eq:ocp_inputpred_cost}), whereas for the constraints \eqref{eq:ocp_outputcons}--\eqref{eq:ocp_termcons}, we use predictors \eqref{eq:ocp_outputpred_constraint}--\eqref{eq:ocp_inputpred_constraint} based on an artificial initialization $\tilde{\bm{\xi}}_k~\in\mathbb{X}$ in \eqref{eq:ocp_init}. Specifically, we choose $\tilde{\bm{\xi}}_k = \bm{\xi}_k$ if OCP~\eqref{eq:ocp} is feasible with this choice; else, we choose $\tilde{\bm{\xi}}_k = {\bm{\xi}}_{1|k-1}$ (i.e., the previously predicted nominal state). Regarding the chance constraint~\eqref{eq:constraints}, this allows us to define $\tilde{k}$ as the most recent time step $\iota \in \mathbb{N}_0^k$ for which OCP~\eqref{eq:ocp} was feasible using the measured state $\tilde{\bm{\xi}}_\iota = \bm{\xi}_\iota$. 
Accordingly, the constraint parameters $\tilde{\bm{\eta}}^y_{i}$, $\tilde{\bm{\eta}}^u_{i}$ are defined as $\tilde{\bm{\eta}}^y_{i} = \bm{\eta}^y_{i}$ and $\tilde{\bm{\eta}}^u_{i} = \bm{\eta}^u_{i}$ from Lem.\;\ref{lem:constight} for $i \in \mathbb{N}_0^{T_{\mathrm{f}}-1}$, and $\tilde{\bm{\eta}}^y_{i} = \bm{\eta}^y_{\infty}$ and $\tilde{\bm{\eta}}^u_{i} = \bm{\eta}^u_{\infty}$ from Asm.\;\ref{asm:termIngredients} for $i \ge T_{\mathrm{f}}$. This initialization strategy combines indirect and direct feedback formulations, similar to \cite[Cor.~8]{knaup2024recursively} or works on initial state optimization \citep{schluter2022stochastic,pan2023data}. Note that \eqref{eq:ocp_init} requires $\tilde{\bm{\xi}}_k {\in\mathbb{X}}$ for feasibility of the OCP. This renders condition~\eqref{eq:inputstatebound_predicted} valid via Asm.~\ref{asm:strongstab}(c) and \eqref{eq:ocp_outputcons}--\eqref{eq:ocp_inputcons}, thus enabling the probabilistic guarantees from Lem.~\ref{lem:constight}.

The implicit control law associated with OCP~\eqref{eq:ocp} reads
\begin{equation} \label{eq:controllaw}
    \bm{\kappa}\left(\bm{\xi}_k\right):= \bm{u}^{*}_k = \bm{K} \bm{\xi}_k + \bm{v}^*_{0|k},
\end{equation}
utilizing the first element of the optimal solution~$\bm{v}^*_{\mathrm{f},k}$. The overall control scheme is summarized in Algorithm\;\ref{alg:controller}. 
\begin{algorithm}
\caption{Stochastic DPC} \label{alg:controller}
\begin{algorithmic}[1]
\renewcommand{\algorithmicrequire}{\textbf{Offline Phase:}}
\renewcommand{\algorithmicensure}{\textbf{Online Phase:}}
\REQUIRE
\STATE Obtain data $\mathcal{U}_T$, $\mathcal{Y}_T$, $\mathcal{X}_{T+1}$ satisfying Asm.\;\ref{asm:trajData} and the disturbance data estimate $\hat{\mathcal{D}}_{T}$ via \eqref{eq:distdata_estimate}.
\STATE Determine $\bm{K}$, $c_{\xi}$, $\gamma_{\xi}$, and $\gamma_v$ satisfying Asm.\;\ref{asm:strongstab}.
\STATE Determine parameters $\bm{\eta}^y_{l}$, $\bm{\eta}^u_{l}$, $l\in \mathbb{N}_0^{T_{\mathrm{f}}-1}$ via Lem.\;\ref{lem:constight}.
\STATE Determine $\bm{\eta}^y_{\infty}$, $\bm{\eta}^u_{\infty}$, and $\mathbb{X}_{\infty}$ satisfying Asm.\;\ref{asm:termIngredients}.
\ENSURE (for all $k \ge 0$)
\STATE Construct $\bm{\xi}_k$ from measurements $\bm{u}_{i},\,\bm{y}_{i}$, $i \in \mathbb{N}_{k-T_{\mathrm{p}}}^{k-1}$, and determine the initialization $\tilde{\bm{\xi}}_k$.
\STATE Solve the OCP \eqref{eq:ocp} to obtain $\bm{v}^*_{\mathrm{f},k}$.
\STATE Apply the input $\bm{u}_{k} = \bm{K} \bm{\xi}_k + \bm{v}^*_{0|k}$ to the system~\eqref{eq:system}.
\end{algorithmic}
\end{algorithm}

Lastly, the following result formalizes the theoretical closed-loop guarantees of recursive feasibility and chance-constraint satisfaction conditioned on the measurement $\bm{\xi}_{\tilde{k}}$ i.e., the most recent feasible measured state, with $\tilde{k} \in  \mathbb{N}_0^k$.
\begin{theorem} \label{thm:properties}
    Let Asm.\;\ref{asm:minss}--\ref{asm:termIngredients} hold. If the OCP~\eqref{eq:ocp} is feasible for $k = 0$, then it is recursively feasible for all $k\ge 1$. Furthermore, for all $k \ge 0$, the closed-loop system~\eqref{eq:system} under control law~\eqref{eq:controllaw} satisfies the chance constraints~\eqref{eq:constraints} conditioned on the measurement $\bm{\xi}_{\tilde{k}}$, where $\tilde{k} \in \mathbb{N}_0^k$ is the most recent time step at which OCP~\eqref{eq:ocp} was feasible with the initialization $\bm{\xi}_{0|\tilde{k}} = \bm{\xi}_{\tilde{k}}$. \hfill \small{$\square$}
\end{theorem}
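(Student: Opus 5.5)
Recursive feasibility will follow from a shifted candidate solution for the constraint predictions, and chance-constraint satisfaction from Lem.~\ref{lem:constight} applied at the time $\tilde{k}$.

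\emph{Recursive feasibility.} Assume OCP~\eqref{eq:ocp} is feasible at time $k$ with optimal $\bm{v}^*_{\mathrm{f},k}$ and constraint-relevant nominal predictions $\bm{\xi}^*_{l|k}$, $\bm{y}^*_{l|k}$, $\bm{u}^*_{l|k}$. At time $k+1$, if the choice $\tilde{\bm{\xi}}_{k+1}=\bm{\xi}_{k+1}$ is feasible, there is nothing to show. Otherwise the scheme sets $\tilde{\bm{\xi}}_{k+1}=\bm{\xi}^*_{1|k}$, and $\tilde{k}$ is unchanged. For this case I take the shifted candidate $\bm{v}_{l|k+1}=\bm{v}^*_{l+1|k}$ for $l\in\mathbb{N}_0^{T_{\mathrm{f}}-2}$ and $\bm{v}_{T_{\mathrm{f}}-1|k+1}=\bm{0}$.

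The key step is to show that the constraint predictions generated by \eqref{eq:ocp_outputpred_constraint}--\eqref{eq:ocp_inputpred_constraint} under this candidate coincide with the shifted old predictions. By Prop.~\ref{prop:consistency_multistep} and \eqref{eq:connection_predictor_output_estimate_model}, the nominal predictor equals the causal, time-invariant multi-step model $\hat{\mathcal{O}}_{T_{\mathrm{f}}}$, $\hat{\mathcal{T}}^v_{T_{\mathrm{f}}}$. I would argue that this model is the $T_{\mathrm{f}}$-fold iteration of the one-step model $(\hat{\bm{\Phi}}_{\mathrm{cl}},\hat{\bm{\Psi}})$ from \eqref{eq:sysparam_estimate}, with associated extended-state matrix $\hat{\bm{A}}_{\mathrm{cl}}$ built via \eqref{eq:system_arx_nonminimal_helper}. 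This identification is the main obstacle: one must verify that the matrices implicitly defined by \eqref{eq:connection_sysparams_data_multistep_2} with $\hat{\mathcal{D}}_T$ are exactly $\col{\hat{\bm{\Phi}}_{\mathrm{cl}},\ldots,\hat{\bm{\Phi}}_{\mathrm{cl}}\hat{\bm{A}}_{\mathrm{cl}}^{T_{\mathrm{f}}-1}}$ and the Toeplitz matrix of $(\hat{\bm{A}}_{\mathrm{cl}},\bm{B},\hat{\bm{\Phi}}_{\mathrm{cl}},\hat{\bm{\Psi}})$. I would try to obtain this by noting that $\hat{\mathcal{D}}_T$ renders the data an exact trajectory of the estimated ARX system (Prop.~\ref{prop:consistency}), so \eqref{eq:dynamics_data_multistep} holds with the hat-matrices. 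The PE assumption on $\hat{\mathcal{D}}_T$ gives full row rank of $\col{\bm{S}_2,\hankel{T_{\mathrm{f}}}{\hat{\mathcal{D}}_T}}$, which yields uniqueness in \eqref{eq:connection_sysparams_data_multistep_2}.

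Granting this, $\bm{\xi}_{l|k+1}=\bm{\xi}^*_{l+1|k}$ for $l\in\mathbb{N}_0^{T_{\mathrm{f}}-1}$, and the new terminal state is $\hat{\bm{A}}_{\mathrm{cl}}\bm{\xi}^*_{T_{\mathrm{f}}|k}\in\mathbb{X}_\infty$ by Asm.~\ref{asm:termIngredients}(a). The output and input constraints for $l\le T_{\mathrm{f}}-2$ are the old ones, with tightening index $l+1+k-\tilde{k}=(l+1)+k-\tilde{k}$ unchanged because $\tilde{k}$ is unchanged. The last step satisfies the constraints via Asm.~\ref{asm:termIngredients}(b), using $\bm{y}=\bm{E}_y\bm{\xi}_{+}$, $\bm{u}=\bm{E}_u\bm{\xi}_{+}$ together with $\tilde{\bm{\eta}}_i\le\bm{\eta}_\infty$. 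Here I also need $\bm{\eta}_l$ to be nondecreasing in $l$ and $\bm{\eta}_\infty\ge\bm{\eta}_l$, so that indices beyond $T_{\mathrm{f}}-1$ can use $\bm{\eta}_\infty$. Finally, $\tilde{\bm{\xi}}_{k+1}=\bm{\xi}^*_{1|k}\in\mathbb{X}$ follows from the tightened constraints, which imply the nominal constraints since $\bm{\eta}\ge\bm{0}$; this also gives \eqref{eq:inputstatebound_predicted} via Asm.~\ref{asm:strongstab}(c).

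\emph{Chance constraints.} Fix $k$ and its $\tilde{k}$. For all times $\iota\in\mathbb{N}_{\tilde{k}}^{k}$, the induction above shows that the constraint predictions are the shifted predictions from time $\tilde{k}$, initialized at $\bm{\xi}_{\tilde{k}}$. In particular, $\bm{u}_{0|k}$ and the corresponding output prediction equal the $(k-\tilde{k})$-step nominal prediction from $\bm{\xi}_{\tilde{k}}$ under the concatenated input sequence, with the terminal feedback appended when $k-\tilde{k}\ge T_{\mathrm{f}}$.

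The closed-loop applied correction inputs $\bm{v}^*_{0|\iota}$ are, by the same shift, exactly the inputs of this concatenated sequence. This is the subtle part: the cost uses $\bm{\xi}_\iota$, but $\bm{v}^*_{0|\iota}$ enters both predictors identically, and in the feasible case the sequence is consistent by construction. The true $\bm{y}_k,\bm{u}_k$ therefore equal the nominal prediction from $\bm{\xi}_{\tilde{k}}$ plus the error terms analysed in Lem.~\ref{lem:prederror_data} and Lem.~\ref{lem:prederror_add} with horizon $k-\tilde{k}$. Those lemmas extend to arbitrary horizon lengths because their bounds are monotone and converge to the $\infty$-values. Applying Lem.~\ref{lem:constight} with tightening $\tilde{\bm{\eta}}_{k-\tilde{k}}$ then yields \eqref{eq:constraints} conditioned on $\bm{\xi}_{\tilde{k}}$.
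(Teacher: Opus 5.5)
Your proposal is correct and follows the paper's proof. It uses the shifted candidate with $\bm{0}$ appended and terminal state $\hat{\bm{A}}_{\mathrm{cl}}\bm{\xi}^*_{T_{\mathrm{f}}|k}$ certified by Asm.~\ref{asm:termIngredients}, keeps $\tilde{k}$ unchanged so the tightening indices stay aligned, and applies Lem.~\ref{lem:constight} to the error accumulated since $\tilde{k}$; you also make explicit the identification of the multi-step predictor with the iterated one-step estimate $(\hat{\bm{\Phi}}_{\mathrm{cl}},\hat{\bm{\Psi}})$, which the paper uses implicitly.

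Two minor corrections. First, monotonicity of $\bm{\eta}_l$ in $l$ is not needed (and is not established, e.g., for the log-det proxies of Lem.~\ref{lem:prederror_add}), since Asm.~\ref{asm:termIngredients}(b) directly provides $\bm{\eta}_\infty\ge\bm{\eta}_l$ for all $l\ge 0$. Second, in closed loop the inputs are re-optimized at every step, so only the initial states are chained via $\bm{\xi}_{0|\iota+1}=\bm{\xi}^*_{1|\iota}$ and no terminal feedback is appended; the tightened constraint with index $k-\tilde{k}$ on $\bm{y}_{0|k}$, $\bm{u}_{0|k}$ holds simply because it is imposed at $l=0$ of the time-$k$ OCP.
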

\begin{proof}
    The proof follows arguments presented in \cite[Thm.~1--2]{hewing2020recursively}, \cite[Cor.~8]{knaup2024recursively}, and \cite{ao2025stochastic}; see \ref{app:proof_properties}.
    \hfill \small{$\blacksquare$}
\end{proof}
We remark that, due to the employed initialization strategy inspired by \cite{knaup2024recursively}, the constraint satisfaction guarantee in Thm.~\ref{thm:properties} differs from classical results of direct and indirect feedback stochastic MPC \citep{hewing2020recursively}: Typically, chance constraints are conditioned on $\bm{\xi}_0$ for indirect feedback approaches, and on $\bm{\xi}_k$ for direct feedback approaches. In contrast, Thm.~\ref{thm:properties} guarantees satisfaction of chance constraints conditioned on $\bm{\xi}_{\tilde{k}}$, thus employing the direct feedback result whenever feasible. The indirect feedback result is readily recovered when restricting the initialization in \eqref{eq:ocp_init} to $\bm{\xi}_{0|k} = \bm{\xi}_{1|k-1}$.

\section{Evaluation} \label{sec:eval}
In this section, we numerically evaluate the proposed DPC scheme on a system of the form \eqref{eq:system} with parameters
\begin{equation}
\bm{\Phi} = \mat{0.939 & 1 & 0.073 \\ 
                 0.017 & -0.060 & 0.997},~~\bm{\Psi} = \bm{0}
\end{equation}
and $T_{\mathrm{p}} = 1$, adapted from \cite{teutsch2024sampling} by normalizing the input. The system is subject to the box constraints $\norminf{\bm{u}_k}\le 1$, $\norminf{\bm{y}_k}\le 3$, and subject to normally distributed disturbances with $\bm{\Sigma}_d = \mathrm{diag}(0.05^2,\,0.025^2)$.
We collect an input--output data trajectory of length $T=100$  satisfying Asm.\;\ref{asm:trajData} by applying random admissible inputs, and obtain the disturbance data estimate~\eqref{eq:distdata_estimate}. In order to satisfy Asm.\;\ref{asm:strongstab}(a), the disturbance distribution is truncated at two standard deviations for the data collection; the set $\mathbb{A}^{\mathrm{d}}$ is then obtained via the relation \eqref{eq:connection_sysparams_data} using the disturbance data bounds (see Remark~\ref{rem:strongstab}). For the cost function~\eqref{eq:ocp_cost}, we choose a prediction horizon of $T_{\mathrm{f}} = 10$ and weights $\bm{R} = 1$, $\bm{Q} = \mathrm{diag}(1,\,100)$.
The feedback gain $\bm{K}$ and constants $c_{i}^{y}$, $c_{j}^{u}$ for Lem.~\ref{lem:prederror_data} (and, subsequently, for the tightened constraints~\eqref{eq:tightcons}) are determined using Prop.~\ref{prop:strongstab} from \ref{app:strongstab} with $\gamma_{\xi} = 0.9$. Accordingly, the terminal constraint set $\mathbb{X}_{\infty}$ is chosen such that Asm.\;\ref{asm:termIngredients} is satisfied. The risk parameters are set to $\eps^y_{i} = \eps^u_{j} = 0.1$.

In order to properly evaluate the proposed controller, we implement four different variants of the proposed control algorithm (Algorithm~\ref{alg:controller}) based on OCP \eqref{eq:ocp}:
\begin{itemize}
	\item[(a)] \textit{Proposed}: The proposed controller as detailed in Algorithm~\ref{alg:controller},
	\item[(b)] \textit{Estimate}: Algorithm~\ref{alg:controller}, but the constraint tightening parameters $\bm{\eta}_l^u$ and $\bm{\eta}_l^y$ are computed by neglecting the disturbance data uncertainty (i.e., $c^{y}_i = c^{u}_j = 0$ and $\eps^{y}_i,\eps^{u}_j \to 2\eps^{y}_i,2\eps^{u}_j$ in \eqref{eq:tightening_params}), and by only using the estimate $\hat{\bm{A}}_{\mathrm{cl}}$ from Asm.~\ref{asm:termIngredients}(a) for \eqref{eq:extstateerror_add_var}, 
	\item[(c)] \textit{Exact}: Algorithm~\ref{alg:controller}, but based on the true disturbance data $\mathcal{D}_T$ (i.e., exact model knowledge), thus $c^{y}_i = c^{u}_j = 0$ and $\eps^{y}_i,\eps^{u}_j \to 2\eps^{y}_i,2\eps^{u}_j$ in \eqref{eq:tightening_params} and the true $\bm{A}_{\mathrm{cl}}$ is used for \eqref{eq:extstateerror_add_var} to compute $\bm{\eta}_l^u$ and $\bm{\eta}_l^y$,
    \item[(d)] \textit{OnlyIF}: Algorithm~\ref{alg:controller}, but strictly using the indirect feedback initialization $\tilde{\bm{\xi}}_k = \tilde{\bm{\xi}}_{1|k-1}$ in \eqref{eq:ocp_init} ($\tilde{k}=0$).
\end{itemize}

Starting at $\bm{\xi}_0 \coloneqq \col{0,\,0,\,2.5}$, the control goal is to steer the system to the origin while satisfying constraints~\eqref{eq:constraints}. In a Monte-Carlo simulation of $1\,000$ runs, each controller is applied for $30$ time steps. At each time step, a random disturbance affects the system. The simulations are carried out in MATLAB using the \texttt{quadprog} solver.

Fig.\;\ref{fig:traj} shows trajectories from 50 exemplary runs of the simulation for the proposed controller. The probabilistic constraint tightening of the proposed scheme enables the system to operate near the constraint boundary, resulting in rapid convergence.
For performance assessment, we compare the total trajectory cost $J_{\mathrm{tot}}= \textstyle \sum_{k=0}^{29} \big(\bm{y}_{k}^{\top} \bm{Q}  \bm{y}_{k} + \bm{u}_{k}^{\top}\bm{R}\bm{u}_{k} \big)$ relative to the exact case in Tab.~\ref{tab:cost}, and the empirical probability $\hat{\eps}_{1,k}$ of constraint violation regarding the constraint on the output $[\bm{y}_k]_1$ in Fig.\;\ref{fig:risk}. The empirical risk $\hat{\eps}_{1,k}$ is computed by counting the number of constraint violations per time step $k$ in each run of the simulation and dividing the result by $1\,000$ (i.e., the total number of runs). Although the certainty equivalence approach (\textit{Estimate}) yields the lowest mean cost, this comes at the price of more frequent constraint violations, even exceeding the specified threshold $\eps^y_{1} = 0.1$ at some time steps. In contrast, the proposed method results in no constraint violations due to the conservatism introduced by Lem.~\ref{lem:prederror_data} and~\ref{lem:constight}, thus yielding larger total trajectory costs. The method that relies purely on the indirect feedback initialization yields the highest trajectory costs. The reason for this is that for $k \ge T_{\mathrm{f}} = 10$, the constraint tightening is dominated by the conservative constraint tightening from the terminal set $\mathbb{X}_{\infty}$. In contrast, the proposed method avoids this by resetting the shift in the tightened constraints \eqref{eq:ocp_outputcons} and \eqref{eq:ocp_inputcons}  whenever the measured extended-state $\bm{\xi}_k$ is feasible.

\begin{figure}
    \centering
    \includegraphics[page=1, clip, trim=15mm 200mm 105mm 75mm, width=0.5\textwidth]{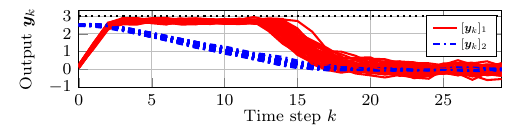}
    \caption{Trajectories of 50 runs under the proposed DPC scheme. Constraints are shown in dotted black lines.}
    \label{fig:traj}
\end{figure}
\begin{figure}
    \centering
    \includegraphics[page=1, clip, trim=15mm 77mm 105mm 200mm, width=0.5\textwidth]{plots.pdf}
    \caption{Empirical risk $\hat{\eps}_{1,k}^y$ of constraint violation for the output $[\bm{y}_k]_1$ over $1\,000$ runs. The dotted black line depicts the specified threshold $\eps^y_{1} = 0.1$.}
    \label{fig:risk}
\end{figure}
\begin{table}
    \centering
    \caption{Increase in total trajectory cost $J_{\mathrm{tot}}$ relative to the exact case (\textit{Exact}) }
    \begin{tabular}{lrrrr} \hline
         & Min~~ & Mean~ & Max~~ & Std Dev \\ \hline
        \textit{Proposed}: & $+1.63\,\%$ & $+2.29\,\%$ & $+4.85\,\%$ & $\pm0.36\,\%$\\
        \textit{Estimate}:  & $-1.55\,\%$ & $-0.44\,\%$ & $+0.01\,\%$ & $\pm0.15\,\%$\\ 
        \textit{OnlyIF}:  & $+5.04\,\%$ & $+8.53\,\%$ & $+12.88\,\%$ & $\pm1.26\,\%$\\ \hline
    \end{tabular}
    \label{tab:cost}
    
\end{table}

\section{Discussion} \label{sec:discuss}
The notion of consistent disturbance data in Section\;\ref{sec:consistency} enables the construction of causal data-driven predictors \eqref{eq:predictor_output_estimate}, \eqref{eq:predictor_extstate_estimate} based on an estimate \eqref{eq:distdata_estimate} of the unknown disturbance data that is consistent with the available input--output data and underlying system class. The evaluation of open-loop prediction errors in \cite[Sec.VI]{teutsch2024sampling} yields numerical evidence that consistent disturbance data estimates (and, thus, causal predictors) are favorable, as has also been reported by \cite{sader2025causality}. The presented analysis of prediction errors in Section\;\ref{sec:prederr} leverages the fact that data-driven predictors based on consistent disturbance data estimates are equivalent to corresponding model-based predictors. 
In fact, by taking disturbance data into account, Prop.~\ref{prop:consistency_multistep} bridges data-driven multi-step predictors commonly used in SPC with model-based predictors constructed from ARX parameters (see \eqref{eq:connection_sysparams_data_multistep_2}), thus unifying SPC and ARX-model-based MPC frameworks. The prediction error analysis of Section\;\ref{sec:prederr} is hence readily adoptable to a pure model-based setting.

Notably, as also common for related works on DPC, the proposed data-driven multi-step predictors \eqref{eq:predictor_output_estimate} and \eqref{eq:predictor_extstate_estimate} are less data efficient than the corresponding model-based predictors constructed from the single-step models. For a discussion on the issue and solution approaches, see \citep{alsalti2024sample}. For a general discussion on the differences between multi-step predictors and propagated single-step models, we refer to \cite{kohler2022state}.

We also remark that the over-approximation of the variance proxy in Lem.\;\ref{lem:prederror_data} is rather conservative due to the use of the Cauchy-Schwarz inequality and the upper-bound $\gamma_v$ from Asm.\;\ref{asm:strongstab}(c). The resulting constraint tightening approach is known in the literature for being conservative \citep{arcari2023stochastic}, and is also reflected in the numerical evaluation (see Sec.~\ref{sec:eval}). In principle, conservatism can be reduced by replacing $\gamma_v$ with $\norm{\col{\hankel{1}{\mathcal{X}_{T}},\hankel{1}{\mathcal{V}_{T}}}^{\dagger}\col{\bm{\xi}_{l|k},\bm{v}_{l|k}}}$, resulting in nonlinear (but convex) tightened constraints via tightening parameters that depend on the decision variable $\bm{v}_{\mathrm{f},k}$. A similar input-state dependent constraint tightening is derived by \cite{balim2024stochastic} and \cite{yin2023stochastic}. 
However, \cite{balim2024stochastic} require Gaussian disturbances and rely on a restrictive assumption \cite[Asm.~1]{balim2024stochastic} on the validity of parameter estimates for a finite amount of data. Furthermore, the constraint tightening in \cite[Lem.~8~\&~Cor.~10]{yin2023stochastic} relies on estimates of the system parameters, invalidating the guarantees if the employed estimates do not correspond to the true system parameters. In contrast, our approach considers sub-Gaussian disturbances, and we leverage the parameter bounds from Asm.~\ref{asm:strongstab}(a)--(b) to robustify our results against potential deviations between the employed estimates~\eqref{eq:sysparam_estimate} and the true parameters~\eqref{eq:connection_truesysparams_distdata}, see Lem.~\ref{lem:prederror_data} and~\ref{lem:prederror_add}. However, as common for set-based methods \citep{arcari2023stochastic}, the computational complexity increases with the dimension of $\mathbb{A}^{\mathrm{d}}$ from Asm.~\ref{asm:strongstab}(a), specifically for \eqref{eq:extstateerror_add_var} in Lem.~\ref{lem:prederror_add}. Additionally, the probabilistic prediction error bounds from Lem.~\ref{lem:prederror_data} grow with $n_{\theta} = n_y(n_{\xi}+n_u)$, i.e., the number of elements of the unknown system parameters $\bm{\Phi}$ and $\bm{\Psi}$, cf. \citep[Lem.~3]{kohler2022state}, \citep[Thm.~1]{balim2024stochastic}.

Lastly, we highlight that the considered ARX form of the system~\eqref{eq:system} can be restrictive since it fixes the disturbance model with respect to the system model. In a more general setting, the dynamics in \eqref{eq:system} might depend on autoregressive terms of the disturbance, i.e., $\bm{d}_k = \sum_{i=0}^{T_{\mathrm{p}}} \bm{\Upsilon}_{i}\bm{w}_{k-i}$ with unknown model parameters $\bm{\Upsilon}_{i}$ and disturbances $\bm{w}_{k-i}$. In this setting, $\bm{d}_k$ represents all the unknown disturbances that affect the system, and its distribution might be estimated from data \cite[Remark~2]{ou2025stochastic}. Specifically, when $\bm{w}$ is sub-Gaussian and $\bm{\Upsilon}_{i}$ is deterministic (or independent of $\bm{w}$) with known bounds as in Asm.~\ref{asm:strongstab}(a), \cite[Thm.~1]{ao2025stochastic} can be applied to obtain an over-approximation of the variance proxy $\bm{\Sigma}_d$ of $\bm{d}_k$, similar to Lem.~\ref{lem:prederror_add}. However, $\bm{d}_k$ might not be independent anymore (cf. Asm.~\ref{asm:disturb}). Considering the cases of correlated or conditionally independent disturbances, \cite[Thm.~1]{ao2025stochastic}could potentially be exploited toward generalizing the proposed approach; a detailed analysis is out of the scope of this work and thus left open for future research.


\section{Conclusion} \label{sec:conclusion}
In this work, we have proposed an output-feedback stochastic data-driven predictive controller for linear systems subject to sub-Gaussian additive disturbances. By leveraging disturbance data estimates that are consistent with the available input--output data and system class, causality is enforced over the data-driven predictions. The causal structure of the identified subspace predictors and their connection to corresponding model-based predictors enable a straightforward analysis of statistical properties of the prediction errors. Deterministic reformulations of chance constraints are obtained by leveraging concentration inequalities for (norms of) sub-Gaussian random variables. To guarantee recursive feasibility and chance constraint satisfaction in closed-loop under the proposed control scheme, we combine classical and indirect feedback formulations from the literature. Despite conservative constraint tightening, numerical evidence shows that the proposed stochastic data-driven predictive control scheme can achieve efficient closed-loop performance.

\def\bibfont{\small}
\bibliographystyle{elsarticle-num-names} 
\bibliography{references.bib}                                              

\appendix
\section{Strongly Stabilizing Feedback} \label{app:strongstab}
Here, we discuss how to obtain a $(c_\xi,\gamma_\xi)$-strongly stabilizing feedback gain $\bm{K}$ for the extended-state dynamics~\eqref{eq:system_arx_nonminimal}, see Asm.~\ref{asm:strongstab}(b). The following result generalizes \cite[Prop.~1]{kerz2024safe} to the considered setting.
\begin{proposition} \label{prop:strongstab}
    Let $(\bm{A}_i,\bm{B}_i)$, $i\in\mathbb{N}_{1}^{N_{\text{v}}}$, be the $N_{\text{v}}$ vertices of the set of system matrices $\mathbb{A}^{\mathrm{d}}$ from Asm~\ref{asm:strongstab}(a). Further, for a fixed $\gamma_\xi\in(0,1)$ and matrices $\bm{M}_1$ and $\bm{M}_2$ of suitable dimensions, let $(c^*, \bm{Z}^*,\bm{Y}^*)$ be the solution of
    \begin{subequations}
        \begin{align}
            &\underset{c>0,~\bm{Z} = \bm{Z}^{\top} \in\mathbb{R}^{n_{\xi}\times n_{\xi}},~\bm{Y} \in\mathbb{R}^{n_u\times n_{\xi}}}{\mathrm{minimize}} ~~ c \label{eq:strongstab_cost}\\
            &\mathrm{s.t.} ~~  c\bm{I}_{n_{\xi}}\succeq \bm{Z} \succeq\bm{M}_2\bm{M}_2^{\top} \label{eq:strongstab_bounds}\\
            &\phantom{\mathrm{s.t.}} ~~ \mat{\gamma_\xi \bm{Z} & \bm{A}_i \bm{Z} + \bm{B}_i \bm{Y} \\ (\bm{A}_i \bm{Z} + \bm{B}_i \bm{Y})^{\top} & \gamma_\xi \bm{Z}} \succ \bm{0} ~~\forall i\in \mathbb{N}_{1}^{N_{\text{v}}}. \label{eq:strongstab_lmi}
        \end{align}
    \end{subequations}
    Then, for all $\bm{A}_{\mathrm{cl}} = \bm{A} + \bm{B}\bm{K}$ with $(\bm{A},\bm{B})\in\mathbb{A}^{\mathrm{d}}$ and feedback gain $\bm{K} = \bm{Y}^*(\bm{Z}^*)^{-1}$, it holds that $\norm{\bm{M}_1\bm{A}_{\mathrm{cl}}^k\bm{M}_2} \le c_M \gamma_\xi^k$ for $c_M = \sqrt{\lambda_{\max}(\bm{M}_1\bm{Z}^*\bm{M}_1^{\top})}$ and all $k\ge 0$, where $\lambda_{\max}(\cdot)$ yields the largest eigenvalue of a matrix. 
    \hfill \small{$\square$}
\end{proposition}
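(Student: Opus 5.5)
The proof follows the standard argument for quadratic stability with LMIs, with norm bounds obtained from a Lyapunov function. Let $\bm{P} \coloneqq (\bm{Z}^*)^{-1} \succ \bm{0}$, which is well defined since \eqref{eq:strongstab_lmi} forces $\gamma_\xi\bm{Z}^*\succ\bm{0}$.

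\textbf{Step 1: LMI at the vertices.} Substituting $\bm{Y}^* = \bm{K}\bm{Z}^*$, the LMI \eqref{eq:strongstab_lmi} at vertex $i$ reads
\begin{equation*}
\mat{\gamma_\xi \bm{Z}^* & \bm{A}_{\mathrm{cl},i}\bm{Z}^* \\ \bm{Z}^*\bm{A}_{\mathrm{cl},i}^{\top} & \gamma_\xi\bm{Z}^*} \succ \bm{0}, \qquad \bm{A}_{\mathrm{cl},i} \coloneqq \bm{A}_i + \bm{B}_i\bm{K}.
\end{equation*}

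\textbf{Step 2: extension to all of $\mathbb{A}^{\mathrm{d}}$.} The matrix in Step 1 is affine in $(\bm{A}_i,\bm{B}_i)$ for fixed $\bm{Z}^*,\bm{Y}^*$. Every $(\bm{A},\bm{B})\in\mathbb{A}^{\mathrm{d}}$ is a convex combination of the vertices, so the LMI holds for every $\bm{A}_{\mathrm{cl}} = \bm{A}+\bm{B}\bm{K}$.

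\textbf{Step 3: Schur complement and congruence.} A Schur complement gives $\gamma_\xi\bm{Z}^* - \gamma_\xi^{-1}\bm{A}_{\mathrm{cl}}\bm{Z}^*\bm{A}_{\mathrm{cl}}^{\top} \succ \bm{0}$, i.e.,
\begin{equation*}
\bm{A}_{\mathrm{cl}}\bm{Z}^*\bm{A}_{\mathrm{cl}}^{\top} \preceq \gamma_\xi^2\bm{Z}^*.
\end{equation*}
Applying the congruence $\bm{Z}^{*-1/2}(\cdot)\bm{Z}^{*-1/2}$ yields $\norm{\bm{Z}^{*-1/2}\bm{A}_{\mathrm{cl}}\bm{Z}^{*1/2}}\le\gamma_\xi$.

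\textbf{Step 4: iteration.} I expect this step to require the most care, because $\bm{A}_{\mathrm{cl}}$ is a fixed matrix and powers are only needed for that fixed matrix. Inducting on $k$ gives
\begin{equation*}
\bm{A}_{\mathrm{cl}}^k\bm{Z}^*(\bm{A}_{\mathrm{cl}}^k)^{\top} \preceq \gamma_\xi^{2k}\bm{Z}^*,
\end{equation*}
since $\bm{A}_{\mathrm{cl}}(\gamma_\xi^{2(k-1)}\bm{Z}^*)\bm{A}_{\mathrm{cl}}^{\top}\preceq\gamma_\xi^{2k}\bm{Z}^*$. The same induction works for time-varying products of matrices from the set.

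\textbf{Step 5: input and output weights.} Write
\begin{equation*}
\norm{\bm{M}_1\bm{A}_{\mathrm{cl}}^k\bm{M}_2}^2 = \lambda_{\max}\big(\bm{M}_1\bm{A}_{\mathrm{cl}}^k\bm{M}_2\bm{M}_2^{\top}(\bm{A}_{\mathrm{cl}}^k)^{\top}\bm{M}_1^{\top}\big).
\end{equation*}
By \eqref{eq:strongstab_bounds}, $\bm{M}_2\bm{M}_2^{\top}\preceq\bm{Z}^*$. Monotonicity of congruence in the Loewner order then gives
\begin{equation*}
\bm{M}_1\bm{A}_{\mathrm{cl}}^k\bm{M}_2\bm{M}_2^{\top}(\bm{A}_{\mathrm{cl}}^k)^{\top}\bm{M}_1^{\top} \preceq \bm{M}_1\bm{A}_{\mathrm{cl}}^k\bm{Z}^*(\bm{A}_{\mathrm{cl}}^k)^{\top}\bm{M}_1^{\top} \preceq \gamma_\xi^{2k}\bm{M}_1\bm{Z}^*\bm{M}_1^{\top}.
\end{equation*}
Taking $\lambda_{\max}$ and square roots gives $c_M\gamma_\xi^k$ with $c_M=\sqrt{\lambda_{\max}(\bm{M}_1\bm{Z}^*\bm{M}_1^{\top})}$.

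The upper bound $c\bm{I}\succeq\bm{Z}$ and the minimization of $c$ play no role in validity. They only make the constant small; with $\bm{M}_1=\bm{I}$ one gets $c_M\le\sqrt{c^*}$. Choosing $\bm{M}_1=[\bm{G}_y]_i\bm{E}_y$ and $\bm{M}_2=\bm{E}_y^{\top}\bm{\Sigma}_d^{1/2}$ recovers the constants $c_i^y$ used in Lem.~\ref{lem:prederror_data}.
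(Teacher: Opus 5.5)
Your proof is correct and follows the same route as the paper's. You apply a Schur complement to the vertex LMIs, extend them to $\mathbb{A}^{\mathrm{d}}$ by convexity, iterate the contraction $\bm{A}_{\mathrm{cl}}\bm{Z}^*\bm{A}_{\mathrm{cl}}^{\top} \preceq \gamma_\xi^2 \bm{Z}^*$ over $k$, and combine $\bm{M}_2\bm{M}_2^{\top}\preceq\bm{Z}^*$ with monotonicity of $\lambda_{\max}$; the only cosmetic difference is that the paper rescales to $\gamma_\xi^{-1}\bm{A}_{\mathrm{cl}}$ and multiplies by $\gamma_\xi^k$ at the end, whereas you carry $\gamma_\xi^{2k}$ through the induction.
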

\begin{proof} 
    If condition \eqref{eq:strongstab_lmi} is satisfied, then, by the Schur complement and due to convexity of $\mathbb{A}^{\mathrm{d}}$, it holds that
    \begin{equation*}
        \gamma_\xi \bm{Z}^* - (\bm{A} \bm{Z}^* + \bm{B} \bm{Y}^*) (\gamma_\xi \bm{Z}^*)^{-1} (\bm{A} \bm{Z}^* + \bm{B} \bm{Y}^*)^{\top} \succ \bm{0}
    \end{equation*}
    for all $(\bm{A},\bm{B})\in\mathbb{A}^{\mathrm{d}}$ . Factorizing $\bm{Z}^*$, dividing by $\gamma_\xi$, and defining $\bm{K} \coloneqq \bm{Y}^*(\bm{Z}^*)^{-1}$ and $\check{\bm{A}}_{\mathrm{cl}} \coloneqq \gamma_\xi^{-1}\bm{A}_{\mathrm{cl}}$, we obtain
    \begin{equation}
        \bm{Z}^* - \check{\bm{A}}_{\mathrm{cl}} \bm{Z}^* \check{\bm{A}}_{\mathrm{cl}}^{\top} \succ \bm{0}. \label{eq:strongstab_lyap}
    \end{equation}
    Now, investigating the term $\norm{\bm{M}_1\check{\bm{A}}_{\mathrm{cl}}^k\bm{M}_2}$, it follows that
    \begin{align}
        \norm{\bm{M}_1\check{\bm{A}}_{\mathrm{cl}}^k\bm{M}_2} &= \sqrt{\lambda_{\max}(\bm{M}_1\check{\bm{A}}_{\mathrm{cl}}^k\bm{M}_2\bm{M}_2^{\top}(\check{\bm{A}}_{\mathrm{cl}}^k)^{\top}\bm{M}_1^{\top})} \notag\\
        &\overset{\eqref{eq:strongstab_bounds}}{\le} \sqrt{\lambda_{\max}(\bm{M}_1\check{\bm{A}}_{\mathrm{cl}}^k\bm{Z}^*(\check{\bm{A}}_{\mathrm{cl}}^k)^{\top}\bm{M}_1^{\top})} \notag\\
        &\overset{\eqref{eq:strongstab_lyap}}{\le} \sqrt{\lambda_{\max}(\bm{M}_1\bm{Z}^*\bm{M}_1^{\top})} =:c_M \label{eq:strongstab_finalineq},
    \end{align}
    leveraging $\check{\bm{A}}_{\mathrm{cl}}^k = \check{\bm{A}}_{\mathrm{cl}}^{k-1}\check{\bm{A}}_{\mathrm{cl}}$ and repeated application of~\eqref{eq:strongstab_lyap}. Lastly, multiplying both sides of inequality \eqref{eq:strongstab_finalineq} with $\gamma_\xi^k$, we obtain the assertion $\norm{\bm{M}_1{\bm{A}}_{\mathrm{cl}}^k\bm{M}_2} \le c_M \gamma_\xi^k$.
    \hfill \small{$\blacksquare$}
\end{proof}
For $\bm{M}_1 = \bm{M_2} = \bm{I}_{n_{\xi}}$, we obtain $\norm{\bm{A}_{\mathrm{cl}}^k} \le c_\xi \gamma_\xi^k$ as in Asm.~\ref{asm:strongstab}(b), i.e., the gain $\bm{K}$ is $(c_\xi,\gamma_\xi)$-strongly stabilizing with $c_\xi = \sqrt{c^*}$. Thus, minimizing $c$ in \eqref{eq:strongstab_cost} ensures that the eigenvalues of $\bm{Z}^*$ are as small as possible. Furthermore, by choosing $\bm{M}_2 = \bm{E}_y^{\top}\bm{\Sigma}_d^{\nicefrac{1}{2}}$ and $\bm{M}_1 = [\bm{G}_y]_i\bm{E}_y$ ($\bm{M}_1 = [\bm{G}_u]_j\bm{E}_u$) for $i \in \mathbb{N}_1^{r_y}$ ($j \in \mathbb{N}_1^{r_u}$), we obtain tight parameters $c_{i}^{y} = c_M$  ($c_{j}^{u} = c_M$) for Lem.~\ref{lem:prederror_data}.

\section{Proof of Proposition\;\ref{prop:consistency_multistep}} \label{app:proof_consistency}
We first prove \eqref{eq:consistency_multistep} following the arguments from \cite[Prop.~2]{pan2021stochastic}. 
Full row-rank of $\bm{S}_{2}$ follows from Asm.\;\ref{asm:trajData}(a)--(b), cf. \citep[Lemma~1]{de2019formulas}. Thus, using $\bm{S}_{2}\bm{S}_{2}^{\dagger}\bm{S}_{2} = \bm{S}_{2}$, we write \eqref{eq:dynamics_data_multistep} as
\begin{align}
    \hankel{T_{\mathrm{f}}}{\mathcal{Y}_{T}} &= \mat{\mathcal{O}_{T_{\mathrm{f}}} & \mathcal{T}_{T_{\mathrm{f}}}^{v} } \bm{S}_{2} + \mathcal{T}_{T_{\mathrm{f}}}^{d}\hankel{T_{\mathrm{f}}}{\mathcal{D}_{T}} \label{eq:system_data_multistep_helper}\\
    &= \mat{\mathcal{O}_{T_{\mathrm{f}}} & \mathcal{T}_{T_{\mathrm{f}}}^{v} } \bm{S}_{2}\bm{S}_{2}^{\dagger}\bm{S}_{2} + \mathcal{T}_{T_{\mathrm{f}}}^{d}\hankel{T_{\mathrm{f}}}{\mathcal{D}_{T}}. \notag
\end{align}
By substituting $\mat{\mathcal{O}_{T_{\mathrm{f}}} & \mathcal{T}_{T_{\mathrm{f}}}^{v} } \bm{S}_{2}$ with $\hankel{T_{\mathrm{f}}}{\mathcal{Y}_{T}} - \mathcal{T}_{T_{\mathrm{f}}}^{d}\hankel{T_{\mathrm{f}}}{\mathcal{D}_{T}}$, we obtain the following relation equivalent to \eqref{eq:consistency_multistep}:
\begin{equation*}
    \hankel{T_{\mathrm{f}}}{\mathcal{Y}_{T}} = \left(\hankel{T_{\mathrm{f}}}{\mathcal{Y}_{T}} - \mathcal{T}_{T_{\mathrm{f}}}^{d}\hankel{T_{\mathrm{f}}}{\mathcal{D}_{T}}\right)\bm{S}_{2}^{\dagger}\bm{S}_{2} + \mathcal{T}_{T_{\mathrm{f}}}^{d}\hankel{T_{\mathrm{f}}}{\mathcal{D}_{T}}.
\end{equation*}

Let us now prove \eqref{eq:connection_sysparams_data_multistep}. From satisfaction of the $1$-step consistency constraint~\eqref{eq:consistency}, the data equation~\eqref{eq:dynamics_data} holds with parameters from~\eqref{eq:connection_sysparams_data}. By recursively extracting the corresponding $\tilde{T}$ columns and applying \eqref{eq:dynamics_data} considering the PE conditions from Asm.\;\ref{asm:trajData}(a)--(b), the $T_{\mathrm{f}}$-step data equation~\eqref{eq:dynamics_data_multistep} follows with corresponding parameters $\hat{\mathcal{O}}_{T_{\mathrm{f}}}$, $\hat{\mathcal{T}}_{T_{\mathrm{f}}}^{v}$, $\hat{\mathcal{T}}_{T_{\mathrm{f}}}^{d}$ (cf. \cite{de2019formulas}). Then, as shown before, \eqref{eq:dynamics_data_multistep} implies that $(\hat{\mathcal{D}}_{T},\,\hat{\mathcal{T}}_{T_{\mathrm{f}}}^{d})$ satisfy  the $T_{\mathrm{f}}$-step consistency constraint \eqref{eq:consistency_multistep}.
Now, since $\bm{\Pi}^v_{T_{\mathrm{f}}}$ in \eqref{eq:consistency_multistep} is the orthogonal projector onto the null-space of $\bm{S}_{2}$, we have that $(\hat{\mathcal{D}}_{T},\,\hat{\mathcal{T}}_{T_{\mathrm{f}}}^{d})$ satisfying \eqref{eq:consistency_multistep} implies $\hankel{T_{\mathrm{f}}}{\mathcal{Y}_{T}} - \hat{\mathcal{T}}_{T_{\mathrm{f}}}^{d}\hankel{T_{\mathrm{f}}}{\hat{\mathcal{D}}_{T}}$ lying in the image space of $\bm{S}_{2}$, i.e., there exists a matrix $\bm{\Theta} \in \mathbb{R}^{n_y T_{\mathrm{f}} \times (n_{\xi} + (n_u + n_y)T_{\mathrm{f}})}$ such that
\begin{equation*}
    \hankel{T_{\mathrm{f}}}{\mathcal{Y}_{T}} - \hat{\mathcal{T}}_{T_{\mathrm{f}}}^{d}\hankel{T_{\mathrm{f}}}{\hat{\mathcal{D}}_{T}} = \bm{\Theta} \bm{S}_{2}.
\end{equation*}
By splitting $\bm{\Theta}$ into $\mat{\hat{\mathcal{O}}_{T_{\mathrm{f}}} & \hat{\mathcal{T}}_{T_{\mathrm{f}}}^{v}}$, we obtain relation \eqref{eq:connection_sysparams_data_multistep}.

Lastly, we prove equivalence of \eqref{eq:connection_sysparams_data_multistep} and \eqref{eq:connection_sysparams_data_multistep_2} under satisfaction of the $T_{\mathrm{f}}$-step consistency constraint~\eqref{eq:consistency_multistep}. Using the properties of the pseudo-inverse, we can write
\begin{equation*}
    \hankel{T_{\mathrm{f}}}{\mathcal{Y}_{T}} \col{\bm{S}_{2},\hankel{T_{\mathrm{f}}}{\hat{\mathcal{D}}_{T}}}^{\dagger} = \mat{\bm{\Theta}_1 & \bm{\Theta}_2}
\end{equation*}
with matrices $\bm{\Theta}_2 \coloneqq \hankel{T_{\mathrm{f}}}{\mathcal{Y}_{T}}\big(\hankel{T_{\mathrm{f}}}{\hat{\mathcal{D}}_{T}}\bm{\Pi}^{v}_{T_{\mathrm{f}}}\big)^{\dagger}$, $\bm{\Theta}_1 \coloneqq \hankel{T_{\mathrm{f}}}{\mathcal{Y}_{T}} \left(\bm{S}_{2}\bm{\Pi}^{d}_{T_{\mathrm{f}}}\right)^{\dagger}$, and $\bm{\Pi}^{d}_{T_{\mathrm{f}}}\coloneqq \bm{I}_{\tilde{T}} - \hankel{T_{\mathrm{f}}}{\hat{\mathcal{D}}_{T}}^{\dagger}\hankel{T_{\mathrm{f}}}{\hat{\mathcal{D}}_{T}}$.
To prove the claim, we need to show that $(\hat{\mathcal{D}}_{T},\,\hat{\mathcal{T}}_{T_{\mathrm{f}}}^{d})$ satisfying~\eqref{eq:consistency_multistep} implies (i) $\bm{\Theta}_2 = \hat{\mathcal{T}}_{T_{\mathrm{f}}}^{d}$ and (ii) $\bm{\Theta}_1 = \mat{\hat{\mathcal{O}}_{T_{\mathrm{f}}} & \hat{\mathcal{T}}_{T_{\mathrm{f}}}^{v}}$. Equivalence of \eqref{eq:connection_sysparams_data_multistep} and \eqref{eq:connection_sysparams_data_multistep_2} is then implied, as both relations are derived from the same constraint~\eqref{eq:consistency_multistep}.

As the projection matrix $\bm{\Pi}^{v}_{T_{\mathrm{f}}}$ is idempotent, we can equivalently write $\bm{\Theta}_2 = \hankel{T_{\mathrm{f}}}{\mathcal{Y}_{T}}\bm{\Pi}^{v}_{T_{\mathrm{f}}}\left(\hankel{T_{\mathrm{f}}}{\hat{\mathcal{D}_{T}}}\bm{\Pi}^{v}_{T_{\mathrm{f}}}\right)^{\dagger}$.
If $(\hat{\mathcal{D}}_{T},\,\hat{\mathcal{T}}_{T_{\mathrm{f}}}^{d})$ satisfy~\eqref{eq:consistency_multistep}, we have that $\hankel{T_{\mathrm{f}}}{\mathcal{Y}_{T}}\bm{\Pi}^{v}_{T_{\mathrm{f}}} = \hat{\mathcal{T}}_{T_{\mathrm{f}}}^{d}\hankel{T_{\mathrm{f}}}{\hat{\mathcal{D}}_{T}}\bm{\Pi}^{v}_{T_{\mathrm{f}}}$. Thus, claim (i) follows from $\bm{\Theta}_2 = \hat{\mathcal{T}}_{T_{\mathrm{f}}}^{d}\hankel{T_{\mathrm{f}}}{\hat{\mathcal{D}}_{T}}\bm{\Pi}^{v}_{T_{\mathrm{f}}}\left(\hankel{T_{\mathrm{f}}}{\mathcal{D}_{T}}\bm{\Pi}^{v}_{T_{\mathrm{f}}}\right)^{\dagger} = \hat{\mathcal{T}}_{T_{\mathrm{f}}}^{d}$.

Further, as the projection matrix $\bm{\Pi}^{d}_{T_{\mathrm{f}}}$ is idempotent, we can write $\bm{\Theta}_1 = \hankel{T_{\mathrm{f}}}{\mathcal{Y}_{T}}\bm{\Pi}^{d}_{T_{\mathrm{f}}} \left(\bm{S}_{2}\bm{\Pi}^{d}_{T_{\mathrm{f}}}\right)^{\dagger}$. As before, $(\hat{\mathcal{D}}_{T},\,\hat{\mathcal{T}}_{T_{\mathrm{f}}}^{d})$ satisfying~\eqref{eq:consistency_multistep} means that relation~\eqref{eq:system_data_multistep_helper} holds. 
Thus, claim (ii) follows from
\begin{align*}
    \bm{\Theta}_1 & =\hankel{T_{\mathrm{f}}}{\mathcal{Y}_{T}}\bm{\Pi}^{d}_{T_{\mathrm{f}}}\hspace{-2pt} \left(\bm{S}_{2}\bm{\Pi}^{d}_{T_{\mathrm{f}}}\right)^{\dagger}\hspace{-2pt} = \mat{\hat{\mathcal{O}}_{T_{\mathrm{f}}} & \hat{\mathcal{T}}_{T_{\mathrm{f}}}^{v} } \bm{S}_{2}\bm{\Pi}^{d}_{T_{\mathrm{f}}} \hspace{-2pt}\left(\bm{S}_{2}\bm{\Pi}^{d}_{T_{\mathrm{f}}}\right)^{\dagger}   \\
    &~~~~~~ + \hat{\mathcal{T}}_{T_{\mathrm{f}}}^{d}\hankel{T_{\mathrm{f}}}{\hat{\mathcal{D}}_{T}}\bm{\Pi}^{d}_{T_{\mathrm{f}}} \left(\bm{S}_{2}\bm{\Pi}^{d}_{T_{\mathrm{f}}}\right)^{\dagger} = \mat{\hat{\mathcal{O}}_{T_{\mathrm{f}}} & \hat{\mathcal{T}}_{T_{\mathrm{f}}}^{v} },
\end{align*}
since $\hankel{T_{\mathrm{f}}}{\hat{\mathcal{D}}_{T}}\bm{\Pi}^{d}_{T_{\mathrm{f}}} = \bm{0}$, completing the proof. 
\section{Proof of Lemma~\ref{lem:prederror_data}} \label{app:proof_prederror_data}
Consider the abbreviation $\bm{S}_1 = \col{\hankel{1}{\mathcal{X}_{T}},\hankel{1}{\mathcal{V}_{T}}}$ as in Prop.~\ref{prop:consistency}, and let us define the extended-state error $\bm{e}^{\xi,\text{data}}_{l+1|k}$ for predicted time step $l+1$ such that $\bm{e}^{y,\text{data}}_{l|k} = \bm{E}_y \bm{e}^{\xi,\text{data}}_{l+1|k}$ and $\bm{e}^{u,\text{data}}_{l|k} = \bm{E}_u \bm{e}^{\xi,\text{data}}_{l+1|k}$, respectively. Then, using $\bm{\Phi}_{\mathrm{cl}} = \bm{E}_y\bm{A}_{\mathrm{cl}}$ and \eqref{eq:prederror_output}, we have
\begin{align*}
    \bm{e}^{\xi,\text{data}}_{l+1|k} = (\bm{A}_{\mathrm{cl}}^{l+1} - \hat{\bm{A}}_{\mathrm{cl}}^{l+1}) \bm{\xi}_{0|k} + \textstyle\sum_{\iota=0}^l(\bm{A}_{\mathrm{cl}}^{l-\iota}\bm{B} - \hat{\bm{A}}_{\mathrm{cl}}^{l-\iota}\hat{\bm{B}})\bm{v}_{\iota|k}
\end{align*}
with the true system matrices $\bm{A}_{\mathrm{cl}} = \bm{A} + \bm{B}\bm{K}$ and $\bm{B}$ of system~\eqref{eq:system_arx_nonminimal} and the estimates $\hat{\bm{A}}_{\mathrm{cl}}$, $\hat{\bm{B}}$ corresponding to the disturbance data estimate $\hat{\mathcal{D}}_{T}$ from \eqref{eq:distdata_estimate} via the relation \eqref{eq:connection_sysparams_data} and \eqref{eq:system_arx_nonminimal_helper}. Now, note that for all $\iota \in \mathbb{N}_0^{l-1}$, we have 
\begin{align*}
    &(\bm{A}_{\mathrm{cl}}^{l-\iota+1} - \hat{\bm{A}}_{\mathrm{cl}}^{l-\iota+1}) \bm{\xi}_{\iota|k} + (\bm{A}_{\mathrm{cl}}^{l-\iota}\bm{B} - \hat{\bm{A}}_{\mathrm{cl}}^{l-\iota}\hat{\bm{B}})\bm{v}_{\iota|k} \\
    &~= \bm{A}_{\mathrm{cl}}^{l-\iota} \mat{\bm{A}_{\mathrm{cl}} & \bm{B}} \mat{\bm{\xi}_{\iota|k}\\\bm{v}_{\iota|k}} - \hat{\bm{A}}_{\mathrm{cl}}^{l-\iota} \mat{\hat{\bm{A}}_{\mathrm{cl}} & \hat{\bm{B}}} \mat{\bm{\xi}_{\iota|k}\\\bm{v}_{\iota|k}}\\
    &~= \bm{A}_{\mathrm{cl}}^{l-\iota} \mat{\bm{A}_{\mathrm{cl}} - \hat{\bm{A}}_{\mathrm{cl}} & \bm{B} - \hat{\bm{B}}} \mat{\bm{\xi}_{\iota|k}\\\bm{v}_{\iota|k}} + (\bm{A}_{\mathrm{cl}}^{l-\iota} -\hat{\bm{A}}_{\mathrm{cl}}^{l-\iota}) \bm{\xi}_{\iota+1|k}
\end{align*}
with $\bm{\xi}_{\iota+1|k} = \mat{\hat{\bm{A}}_{\mathrm{cl}} & \hat{\bm{B}}} \col{\bm{\xi}_{\iota|k},\bm{v}_{\iota|k}}$ via the causal structure of \eqref{eq:predictor_extstate_estimate}. From this reformulation and \eqref{eq:sysparam_error}, we obtain     
\begin{align}
    \bm{e}^{\xi,\text{data}}_{l+1|k} &= -\sum_{\iota=0}^{l} \bm{A}_{\mathrm{cl}}^{l-\iota}\bm{E}_y^{\top}\hankel{1}{{\mathcal{D}}_{T}} \bm{S}_1^{\dagger}\vc{\bm{\xi}_{\iota|k}\\\bm{v}_{\iota|k}},\notag
\end{align}
since $\mat{\bm{A}_{\mathrm{cl}} - \hat{\bm{A}}_{\mathrm{cl}} & \bm{B} - \hat{\bm{B}}} = \bm{E}_y^\top\mat{\bm{\Phi}_{\mathrm{cl}} - \hat{\bm{\Phi}}_{\mathrm{cl}} & \bm{\Psi} - \hat{\bm{\Psi}}}$ via \eqref{eq:system_arx_nonminimal_helper}. Leveraging $[\bm{G}_y]_i\bm{y}_{l|k} = [\bm{G}_y]_i\bm{E}_y\bm{\xi}_{l+1|k}$ and $[\bm{G}_u]_j\bm{u}_{l|k} = [\bm{G}_u]_j\bm{E}_u\bm{\xi}_{l+1|k}$ with $i\in\mathbb{N}_1^{r_y}$ and $j\in\mathbb{N}_1^{r_u}$, we further obtain
\begin{align*}
    [\bm{G}_y]_i\bm{e}^{y,\text{data}}_{l|k} = -\sum_{\iota=0}^{l} [\bm{G}_y]_i\bm{E}_y\bm{A}_{\mathrm{cl}}^{l-\iota}\bm{E}_y^{\top}\hankel{1}{{\mathcal{D}}_{T}} \bm{S}_1^{\dagger}\vc{\bm{\xi}_{\iota|k}\\\bm{v}_{\iota|k}},\\
    [\bm{G}_u]_j\bm{e}^{u,\text{data}}_{l|k} = -\sum_{\iota=0}^{l} [\bm{G}_u]_j\bm{E}_u\bm{A}_{\mathrm{cl}}^{l-\iota}\bm{E}_y^{\top}\hankel{1}{{\mathcal{D}}_{T}} \bm{S}_1^{\dagger}\vc{\bm{\xi}_{\iota|k}\\\bm{v}_{\iota|k}},
\end{align*}
for the output and input error terms, respectively.

Since $\bm{S}_1^{\dagger}$ has full column-rank due to Asm.~\ref{asm:trajData}(b), we can write $\bm{S}_1^{\dagger} = \bm{S}_{\text{l}} \bm{S}_{\text{r}}$ with $\bm{S}_{\text{l}} \in \mathbb{R}^{T\times (n_{\xi}+n_u)}$ and $\bm{S}_{\text{r}} \in \mathbb{R}^{(n_{\xi}+n_u)\times (n_{\xi}+n_u)}$ such that $\bm{S}_{\text{l}}^{\top}\bm{S}_{\text{l}} = \bm{I}_{n_{\xi}+n_u}$ and $\norm{\bm{S}_{\text{r}} \col{\bm{\xi},\bm{v}}} = \lVert{\bm{S}_{1}^{\dagger} \col{\bm{\xi},\bm{v}}}\rVert$ for all $\col{\bm{\xi},\bm{v}}\in \mathbb{R}^{n_{\xi}+n_u}$.
Moreover, we define $\ul{{\bm{d}}}_{\theta}$ as the column-wise vectorization of $(\bm{\Sigma}_d^{\nicefrac{1}{2}})^{-1}\hankel{1}{{\mathcal{D}}_{T}}\bm{S}_{\text{l}}$, being zero-mean sub-Gaussian with variance proxy $(\bm{S}_{\text{l}}^{\top} \otimes (\bm{\Sigma}_d^{\nicefrac{1}{2}})^{-1})(\bm{I}_T \otimes \bm{\Sigma}_d)(\bm{S}_{\text{l}}^{\top} \otimes (\bm{\Sigma}_d^{\nicefrac{1}{2}})^{-1})^{\top} = \bm{I}_{n_{\theta}}$ via Asm.~\ref{asm:disturb} and \citep[Thm.~1]{ao2025stochastic} with $n_{\theta} = n_y(n_{\xi}+n_u)$. Then, we can write
\begin{align*}
    &[\bm{G}_y]_i\bm{e}^{y,\text{data}}_{l|k} = \\
    &\hspace{5mm}-\left(\sum_{\iota=0}^{l} \left(\bm{S}_{\text{r}}\vc{\bm{\xi}_{\iota|k}\\\bm{v}_{\iota|k}}\right)^{\top} \otimes [\bm{G}_y]_i\bm{E}_y\bm{A}_{\mathrm{cl}}^{l-\iota}\bm{E}_y^{\top}\bm{\Sigma}_d^{\nicefrac{1}{2}}\right) \ul{{\bm{d}}}_{\theta}, \notag\\
    &[\bm{G}_u]_j\bm{e}^{u,\text{data}}_{l|k} = \\
    &\hspace{5mm}-\left(\sum_{\iota=0}^{l} \left(\bm{S}_{\text{r}}\vc{\bm{\xi}_{\iota|k}\\\bm{v}_{\iota|k}}\right)^{\top} \otimes [\bm{G}_u]_j\bm{E}_u\bm{A}_{\mathrm{cl}}^{l-\iota}\bm{E}_y^{\top}\bm{\Sigma}_d^{\nicefrac{1}{2}}\right) \ul{{\bm{d}}}_{\theta}.\notag
\end{align*}  
By applying the Cauchy-Schwarz inequality and leveraging condition~\eqref{eq:inputstatebound_predicted} and the constants  $\gamma_v$, $c_{i}^{y}$, $c_{j}^{u}$, $\gamma_\xi$, we obtain 
\begin{subequations}
    \begin{align*}
        &[\bm{G}_y]_i\bm{e}^{y,\text{data}}_{l|k} \le  \big(\textstyle\sum_{\iota=0}^{l} \gamma_v c_{i}^{y}\gamma_{\xi}^{\iota}\big)  \norm{\ul{{\bm{d}}}_{\theta}},\\
        &[\bm{G}_u]_j\bm{e}^{u,\text{data}}_{l|k} \le \big(\textstyle\sum_{\iota=0}^{l} \gamma_v c_{j}^{u}\gamma_{\xi}^{\iota}\big) \norm{\ul{{\bm{d}}}_{\theta}}.
    \end{align*}  
\end{subequations}
The assertion then follows by applying \citep[Thm.~2]{ao2025stochastic}, yielding $\PrBig{\norm{\ul{{\bm{d}}}_{\theta}} \le \sqrt{n_{\theta}(1+f^{-1}(\eps^{-\frac{2}{n_{\theta}}}))}} \ge 1-\eps$ for any $\eps\in(0,1)$, with $f(\delta) = \exp{\delta}/(1+\delta)$.

\section{Proof of Lemma~\ref{lem:prederror_add}} \label{app:proof_prederror_add}
Note that, by Asm.\;\ref{asm:disturb} and \cite[Thm.~1]{ao2025stochastic}, $\bm{e}^{y,\text{add}}_{\mathrm{f},k} = \mathcal{T}_{T_{\mathrm{f}}}^{d} \bm{d}_{\mathrm{f},k}$ follows a zero-mean sub-Gaussian distribution with variance proxy $\mathcal{T}_{T_{\mathrm{f}}}^{d} (\bm{I}_{T_{\mathrm{f}}} \otimes \bm{\Sigma}_d) (\mathcal{T}_{T_{\mathrm{f}}}^{d})^{\top}$. Similar to \ref{app:proof_prederror_data}, let us define the corresponding extended-state error $\bm{e}^{\xi,\text{add}}_{l+1|k}$ for predicted time step $l+1$ such that $\bm{e}^{y,\text{add}}_{l|k} = \bm{E}_y \bm{e}^{\xi,\text{add}}_{l+1|k}$ and $\bm{e}^{u,\text{add}}_{l|k} = \bm{E}_u \bm{e}^{\xi,\text{add}}_{l+1|k}$, $l\in\mathbb{N}_0^{T_{\mathrm{f}}-1}$. Exploiting $\bm{\Phi}_{\mathrm{cl}} = \bm{E}_y\bm{A}_{\mathrm{cl}}$ and Asm.\;\ref{asm:strongstab}(a), we can apply the arguments presented in \cite[Sec.~III.B]{arcari2023stochastic} to conclude that the extended-state error $\bm{e}^{\xi,\text{add}}_{l+1|k}$ follows a zero-mean sub-Gaussian distribution  with the variance proxy over-approximation $\ol{\bm{\Sigma}}^{\xi,\text{add}}_{l+1}$ satisfying \eqref{eq:extstateerror_add_var}.
The assertion then follows from \cite[Thm.~1]{ao2025stochastic} by leveraging $[\bm{G}_y]_i\bm{y}_{l|k} = [\bm{G}_y]_i\bm{E}_y\bm{\xi}_{l+1|k}$ and $[\bm{G}_u]_j\bm{u}_{l|k} = [\bm{G}_u]_j\bm{E}_u\bm{\xi}_{l+1|k}$ with $i\in\mathbb{N}_1^{r_y}$ and $j\in\mathbb{N}_1^{r_u}$.

\section{Proof of Lemma~\ref{lem:constight}}\label{app:proof_constight}
Consider any $l \in \mathbb{N}_0^{T_{\mathrm{f}}-1}$, $i\in\mathbb{N}_1^{r_y}$, and $j\in\mathbb{N}_1^{r_u}$. Applying \cite[Lem.~2]{ao2025stochastic} and Lem.~\ref{lem:prederror_add}, we obtain $\PrBig{\left[\bm{G}_y\right]_i \bm{e}^{y,\text{add}}_{l|k} \le \sqrt{2\ln(1/\check{\eps}^y)\ol{\bm{\Sigma}}^{y,\text{add}}_{l,i}}} \ge 1-\check{\eps}^y$ and $\PrBig{\left[\bm{G}_u\right]_j \bm{e}^{u,\text{add}}_{l|k} \le \sqrt{2\ln(1/\check{\eps}^u)\ol{\bm{\Sigma}}^{u,\text{add}}_{l,j}}} \ge 1-\check{\eps}^u$ for any $\check{\eps}^y, \check{\eps}^u \in (0,1)$. Further, from Lem.~\ref{lem:prederror_data} and Boole's inequality with $\tilde{\eps}^y = \check{\eps}^y = \eps_i^y/2$ and $\tilde{\eps}^u = \check{\eps}^u = \eps_j^u/2$, we obtain $\PrBig{\left[\bm{G}_y\right]_i \bm{e}^y_{l|k} \le \left[\bm{\eta}_l^y\right]_i} \ge 1-\eps^y_{i}$ and $\PrBig{\left[\bm{G}_u\right]_j \bm{e}^u_{l|k} \le \left[\bm{\eta}_l^u\right]_j} \ge 1-\eps^u_{j}$ with $\bm{\eta}_l^y$ and $\bm{\eta}_l^u$ from \eqref{eq:tightening_params}, and with the total prediction errors $\bm{e}^y_{l|k} = \bm{e}^{y,\text{data}}_{l|k} + \bm{e}^{y,\text{add}}_{l|k}$ and $\bm{e}^u_{l|k} = \bm{e}^{u,\text{data}}_{l|k} + \bm{e}^{u,\text{add}}_{l|k}$. The assertion then follows from $\bm{y}_{k+l} = \bm{y}_{l|k} + \bm{e}^y_{l|k}$, $\bm{u}_{k+l} = \bm{u}_{l|k} + \bm{e}^u_{l|k}$, and \eqref{eq:tightcons}.

\section{Proof of Theorem~\ref{thm:properties}} \label{app:proof_properties}

Recursive feasibility follows from standard arguments presented in the indirect feedback literature~\cite[Thm.~1]{hewing2020recursively}: Feasibility for $k=0$ and $\bm{\xi}_{0|0} = \bm{\xi}_0 \in \mathbb{X}$ is assumed, with $\mathbb{X}$ derived from the nominal input and output constraints as outlined in Asm.~\ref{asm:strongstab}(c). Feasibility for $k>0$ is proven using the indirect feedback initialization $\bm{\xi}_{0|k} = \bm{\xi}_{1|k-1}$: Let $\bm{v}^*_{\mathrm{f},k} = \col{\bm{v}^*_{0|k},\dots,\bm{v}^*_{T_{\mathrm{f}}-1|k}}$ be the feasible solution from time step $k\ge0$ with given $\bm{\xi}^*_{0|k} = \tilde{\bm{\xi}}_{k}$, with corresponding inputs $\bm{u}^*_{\mathrm{f},k} = \col{\bm{u}^*_{0|k},\dots,\bm{u}^*_{T_{\mathrm{f}}-1|k}}$, outputs $\bm{y}^*_{\mathrm{f},k} = \col{\bm{y}^*_{0|k},\dots,\bm{y}^*_{T_{\mathrm{f}}-1|k}}$, and extended states $\bm{\xi}^*_{\mathrm{f},k} = \col{\bm{\xi}^*_{1|k},\dots,\bm{\xi}^*_{T_{\mathrm{f}}|k}}$. For step $k+1$, we do not reset $\tilde{k}$, and we define $\bm{\xi}_{0|k+1} = \bm{\xi}^*_{1|k}$ and the candidates $\tilde{\bm{v}}_{\mathrm{f},k+1} = \col{\bm{v}^*_{1|k},\dots,\bm{v}^*_{T_{\mathrm{f}}-1|k},\bm{0}}$ and $\tilde{\bm{\xi}}_{\mathrm{f},k+1} = \col{\bm{\xi}^*_{2|k},\dots,\bm{\xi}^*_{T_{\mathrm{f}}|k},\hat{\bm{A}}_{\mathrm{cl}}\bm{\xi}^*_{T_{\mathrm{f}}|k}}$. Note that $\bm{\xi}^*_{0|k}\in \mathbb{X} \implies \bm{\xi}^*_{1|k} \in \mathbb{X}$, since $\bm{\xi}^*_{1|k}$ is constructed from $\bm{\xi}^*_{0|k}$ and predicted inputs and outputs that satisfy the tightened nominal constraints. Accordingly, we obtain the candidates $\tilde{\bm{u}}_{\mathrm{f},k+1} = \col{\bm{u}^*_{1|k},\dots,\bm{u}^*_{T_{\mathrm{f}}-1|k},\bm{E}_u\hat{\bm{A}}_{\mathrm{cl}}\bm{\xi}^*_{T_{\mathrm{f}}|k}}$ and $\tilde{\bm{y}}_{\mathrm{f},k+1} = \col{\bm{y}^*_{1|k},\dots,\bm{y}^*_{T_{\mathrm{f}}-1|k},\bm{E}_y \hat{\bm{A}}_{\mathrm{cl}}\bm{\xi}^*_{T_{\mathrm{f}}|k}}$. By design of the constraints~\eqref{eq:ocp_outputcons}, \eqref{eq:ocp_inputcons}, and \eqref{eq:ocp_termcons}, via Asm.~\ref{asm:termIngredients}, OCP~\eqref{eq:ocp} is feasible for these candidates.

Satisfaction of the chance constraints~\eqref{eq:constraints}, conditioned on the most recent feasible measured state $\bm{\xi}_{\tilde{k}}$,
follows from arguments presented in \cite[Thm.~2]{hewing2020recursively} and \cite[Cor.~8]{knaup2024recursively}: Let $\tilde{k}$ be the most recent time step for which OCP~\eqref{eq:ocp} was feasible using the initialization $\bm{\xi}_{0|\tilde{k}} = \bm{\xi}_{\tilde{k}}$.
Thus, the extended-state error at time $\tilde{k}$ is $\bm{e}^{\xi}_{\tilde{k}} = \bm{\xi}_{\tilde{k}} - \bm{\xi}_{0|\tilde{k}} = \bm{0}$. Now, let $\ol{k}\in\mathbb{N}$ be the number of consecutive time steps where the measured states $\bm{\xi}_{\tilde{k}+1},\dots,\bm{\xi}_{\tilde{k}+\ol{k}}$ are infeasible for OCP~\eqref{eq:ocp}, i.e., the indirect feedback initialization $\bm{\xi}_{0|l+1} = \bm{\xi}_{1|l}$ is used for $l\in\mathbb{N}_{\tilde{k}}^{\tilde{k}+\ol{k}-1}$. Furthermore, let $\bm{v}^*_{\tilde{k}},\dots,\bm{v}^*_{\tilde{k}+\ol{k}-1}$ be the corresponding sequence of closed-loop optimal controls. Then, the error terms satisfy $[\bm{G}_y]_i\bm{e}^y_{l} = [\bm{G}_y]_i\bm{e}^y_{l-{\tilde{k}}|\tilde{k}}$ and  $[\bm{G}_u]_j\bm{e}^u_{l} = [\bm{G}_u]_j\bm{e}^u_{l-{\tilde{k}}|\tilde{k}}$.
Since the corresponding predicted inputs and outputs satisfy the tightened constraints~\eqref{eq:ocp_inputcons} and \eqref{eq:ocp_outputcons} with the tightening parameters $\tilde{\eta}^u_l$ and $\tilde{\eta}^y_l$, the assertion follows from Lem.~\ref{lem:constight}.

\end{document}